%% file: main_camera.tex
\documentclass{article}

\usepackage{xcolor}
\makeatletter
\@ifundefinedcolor{uscred}{\definecolor{uscred}{HTML}{990000}}{}
\makeatother

\usepackage[normalem]{ulem}

\newif\ifshowcomments
\showcommentstrue

\usepackage{microtype}
\usepackage{graphicx}
\usepackage{subcaption}
\usepackage{booktabs} 

\usepackage{hyperref}

\usepackage{tcolorbox}

\usepackage[accepted]{icml2026}

\usepackage{amsmath}
\usepackage{amssymb}
\usepackage{mathtools}
\usepackage{amsthm}

\usepackage{url}
\usepackage{graphicx} 
\usepackage{amsfonts}
\usepackage{array}
\usepackage{graphicx}
\usepackage{amsmath}
\usepackage{comment}
\usepackage{algorithm}
\usepackage{algorithmic}
\usepackage{cancel}
\usepackage{booktabs}
\usepackage{enumerate}
\usepackage{amsthm}
 \usepackage{amssymb}
\usepackage{makecell}
\usepackage[export]{adjustbox}
\usepackage[T1]{fontenc}
\usepackage{multirow} 
\usepackage[utf8]{inputenc} 
\usepackage[T1]{fontenc}    
\usepackage{url}            
\usepackage{booktabs}       
\usepackage{amsfonts}       
\usepackage{nicefrac}       
\usepackage{microtype}      
\usepackage{xcolor}         
\usepackage{enumitem}

\usepackage[utf8]{inputenc} 
\usepackage[T1]{fontenc}    
\usepackage{url}            
\usepackage{booktabs}       
\usepackage{amsfonts}       
\usepackage{nicefrac}       
\usepackage{microtype}      
\usepackage{xcolor}         

\usepackage{caption}
\usepackage{subcaption}

\usepackage[capitalize,noabbrev]{cleveref}

\theoremstyle{plain}
\newtheorem{theorem}{Theorem}[section]

\theoremstyle{definition}
\newtheorem{definition}[theorem]{Definition}

\theoremstyle{remark}

\usepackage[textsize=tiny]{todonotes}

\icmltitlerunning{``Someone Hid It!'': Query-Agnostic Black-Box Attacks on LLM-Based Retrieval }

\begin{document}

\twocolumn[
  \icmltitle{``Someone Hid It!'': Query-Agnostic Black-Box Attacks\\ on LLM-Based Retrieval }

  \icmlsetsymbol{equal}{*}

  \begin{icmlauthorlist}
    \icmlauthor{Jiate Li}{usc}
    \icmlauthor{Defu Cao}{usc}
    \icmlauthor{Li Li}{usc}
    \icmlauthor{Wei Yang}{usc}
    \icmlauthor{Yuehan Qin}{usc}
    \icmlauthor{Chenxiao Yu}{usc}
    \icmlauthor{Tiannuo Yang}{usc}
    \icmlauthor{Ryan A. Rossi}{ado}
    \icmlauthor{Yan Liu}{usc}
    \icmlauthor{Xiyang Hu}{asu}
    \icmlauthor{Yue Zhao}{usc}
    
  \end{icmlauthorlist}

  \icmlaffiliation{usc}{Thomas Lord Department of Computer Science, University of Southern California, Los Angeles, United States}
  \icmlaffiliation{ado}{Adobe Research, San Jose, United States}
  \icmlaffiliation{asu}{Department of Information Systems, Arizona State University, Tempe, United States}

  \icmlcorrespondingauthor{Yue Zhao}{yue.z@usc.edu}

  \icmlkeywords{Machine Learning, ICML}

  \vskip 0.3in
]



\printAffiliationsAndNotice{}  

\input{section_camera/0_abstract}

\input{section_camera/1_introduction}

\input{section_camera/2_background}

\input{section_camera/3_method}

\input{section_camera/4_experiments}

\input{section_camera/5_related}

\input{section_camera/6_discussion}

\input{section_camera/7_conclusion}

\section*{Impact Statement}
This paper studies the robustness of LLM-based retrieval systems by identifying a previously underexplored failure mode under realistic black-box conditions. The primary goal of this work is to \textit{advance the understanding of retrieval behavior} in modern LLM-based systems.
The techniques analyzed in this paper are intended to characterize system vulnerabilities rather than to promote adversarial misuse. 
Similar retrieval failures may arise unintentionally in practice due to benign document edits, formatting changes, or content updates. By systematically studying such behaviors, this work aims to inform the development and evaluation of more robust retrieval systems.

This research does not involve human subjects, personal data, or user interaction logs. All experiments are conducted on public benchmark datasets using open-source models. 
We do not identify additional societal consequences beyond those commonly associated with research on system robustness and security in machine learning.

\section*{Acknowledgments}

This work was partially supported by the National Science Foundation under Award No.~2428039, No.~2346158,  No. ~2449280.
We also acknowledge the use of computational resources provided by the Advanced Cyberinfrastructure Coordination Ecosystem \cite{boerner2023access}: Services \& Support (ACCESS) program, supported by NSF grants \#2138259, \#2138286, \#2138307, \#2137603, and \#2138296. Specifically, this work used the NCSA Delta GPU at the National Center for Supercomputing Applications (NCSA) through allocations CIS251004, CIS260196, and CIS250765.  
The work is also partially supported by Amazon Research Awards. 
Any opinions, findings, conclusions, or recommendations expressed in this material are those of the authors and do not necessarily reflect the views of the National Science Foundation and Amazon.
This work was also supported in part by the Department of Defense under Cooperative Agreement Number W911NF-24-2-0133. The views and conclusions contained in this document are those of the authors and should not be interpreted as representing the official policies, either expressed or implied, of the Army Research Office or the U.S. Government. The U.S. Government is authorized to reproduce and distribute reprints for Government purposes notwithstanding any copyright notation herein. 

\bibliography{icml2026}
\bibliographystyle{icml2026}

\newpage
\appendix
\onecolumn

\section{Proofs}
\subsection{Proof for Theorem~\ref{lem:Trans}}
\label{App:ProofTheorem}
On the surrogate model $g$, when the maximal similarity of the injected document $d'$ with the topic cluster set is lower than the minimal similarity within the topic cluster with a positive gap value $\epsilon_{g}$, we could infer the document is injected outside the set:
\[(\max_{X\in \mathcal{X}}\text{sim}(g(d'),g(X))\le \min_{X_{i},X_{j}\in\mathcal{X}} \text{sim}(g(X_{i}),g(X_{j}))-\epsilon_{g})\wedge (\epsilon_{g}>0) \implies d'\notin \mathcal{X}\]
Since the victim model $f$ is black-box model to us, we can only assume it as an independent embedding function with the surrogate model $g$. However, as the victim model $f$ follows the $\epsilon_{f}-p_{\epsilon}^{f}$ precision on this cluster set $\mathcal{X}$, which describes a probabilistic property for documents on non-relevant topic distribution, we could simply estimate outside $d'$ document a probabilistic bound by this property:
\[d'\notin\mathcal{X}\overset{p_{\epsilon}^{f} }{\implies}\text{sim}(f(d'),f(q))\leq\max_{X\in \mathcal{X}}\text{sim}(f(d'),f(X))\leq\max_{X_{i},X_{j}\in \mathcal{X}}\text{sim} (f(X_{i}),f(X_{j}))-\epsilon_{f} \]
This two-step deduction proves our theorem.

\subsection{Approximation Deduction for Verification}
\label{App:approximate}
In this section we provide a further empirical verification on the transferability lemma to show its rationality in practice.
\paragraph{How we verify} To be specific, for all tuples of query $q$, groundtruth document $d$ and attacked $d'$ in our main experiment results, we first pick up the tuples that satisfy:
\[\max(\text{sim}(g(d'),g(q)),g(d))\leq \text{sim}(g(d),g(q))\]
and count how much fraction of these tuples satisfy the following inequality for a given $\epsilon_{f}$, which we choose as 0, 0.05 in this verification experiment:
\[\text{sim}(f(d'),f(q))\leq\text{sim}(f(d),f(q))-\epsilon_{f}\]
\paragraph{Why could verify}
Under this evaluation, we treat $d$ and $q$ as an extremely small topic corpus $\mathcal{X}=\{q,d\}$. Noting such a weak corpus may not have a positive $\epsilon_{g}$ with $p_{\epsilon}^{g}=1$, without loss of generalization, we simply give relaxation on the left hand of the transferability with removal of $\epsilon_{g}>0$ and verify for:
\begin{align*}
&\max_{X\in\mathcal{X}}\text{sim}(g(d'),g(X))\leq\min_{X_{i},X_{j}\in\mathcal{X}}\text{sim}(g(X_{i}),g(X_{j}))=\text{sim}(g(q),g(d))\\
\overset{p_{\epsilon}^{f} }{\implies}&\text{sim}(f(d'),f(q))\leq \min_{X_{i},X_{j}\in\mathcal{X}}\text{sim}(g(X_{i}),g(X_{j}))-\epsilon_{f} = \text{sim}(f(d),f(q))-\epsilon_{f}
\end{align*}
To verify the practical $p^{f}_{\epsilon}$ probability by expectation, we only need to count how much fraction of tuples satisfying the left hand could also satisfy the righthand, and that's exact what we do in the above statement. We emphasize that such relaxation on left hand dors not reduce the strongness of verification, instead, posing an approximation and potentially more hardness.
\paragraph{Verification results}
The evaluation results of each dataset are illustrated in Table~\ref{tab:verify}:
\begin{table}[b]
    \centering
    \begin{tabular}{c|c|c|c|c}
    \toprule
         Data. & Eco.&Psy. & Bio. & Rob. \\
\hline
         $p_{0}^{f}$&80.0\%&69.7\%&72.0\%&58.9\% \\
         \hline
         $p_{0.05}^{f}$&7.6\%&3.7\%&0.9\%&1.4\%\\
\bottomrule
    \end{tabular}
    \caption{Verification result on each dataset.}
    \label{tab:verify}
\end{table}
Given the relatively small topic corpus constructed with low $\epsilon-p_{\epsilon}^{f}$ and hardness brought by left hand relaxation, our considerable verifed probability on $\epsilon_{f}=0$ 
 and non-zero probability on $\epsilon_{f}=0.05$ 
 could partially prove the rationality of our theoretician deduction steps and conclusion.

\section{Additional Experiments}
\subsection{Perplexity Detection Defense}
\label{App:perplexity}
Detection based methods have been a popular approach for selecting malicious inputs from large distributional data, which are proved effective against both LLM token injection~\cite{hu2023token,alon2023detecting} and general OOD cases~\cite{Li_2025_CVPR,li2025secureondevicevideoood} in other domains. In this study we conduct an evaluation on perplexity of our attacked documents, and compare with the original ones. Specifically, we compare with (1) corresponding original document perplexity, (2) the 100th largest and (3) the largest perplexity of the original corpus (each dataset 50K around docs). The results are illustrated in Table~\ref{tab:perplex}.
\begin{table}[h]
    \centering
    \begin{tabular}{c|c|c|c|c}
    \toprule
        Compared &Eco.&Psy.&Bio.&Rob.  \\
         \hline Original &100\%&100\%&100\%&100\%\\
         100th L.&7.3\%&3.5\%&11.4\%&2.7\%\\
         Largest &0.0\%&0.0\%&0.0\%&0.0\%\\
    \bottomrule
    \end{tabular}
    \caption{Perplexity Comparison}
    \label{tab:perplex}
\end{table}
We first notice that indeed our attack increases all victim documents perplexity. However, comparing with the whole original corpus, most of our attacked documents do not override the highest perplexity part (top 100), and none of them exceeds the highest one. This suggests our attacked documents are still in the original corpus’ natural perplexity range without outstanding gap. Noting our attack is on a single document rather than creating a set of documents, it is hard to be detected: to detect the attacked document by perplexity, even if the detector blindly removes top 100 perplexity documents, ours is still likely to survive. Therefore, the perplexity detection is not effective for our attack. We also present a shortened attacked sample:
\begin{tcolorbox}
Evidence[edit] Growth and aging[edit] There is a large body of evidence …(7012 characters except space) …through clearance and recycling of damaged proteins and organelles..basicConfig useContext.dispatchEvent ArgumentOutOfRangeException corres SNMP sede masturb-Javadoc lineno
\end{tcolorbox}
This attacked example is from Bio. dataset. Compared with the original long context, our injected tokens are short and have no outstanding characters. This explains why the perplexity increase is not obviously high. 

\section{Prompt Designs} 
\subsection{Prompts for Theoretical Verification} To verify the rationality of the $\epsilon$-$p_{\epsilon}$-Precise retriever definition, we utilize the following prompts to generate 240 knowledge documents in 24 different topics: 
\begin{tcolorbox}

            topics = ["Chinese Politics","Japanese Politics","American Politics","English Politics", "Comedy Movies","Action Movies","Horror Movies","Romantic Movies", "Electronic Engineering","Computer Engineering","Biology Engineering","Chemical Engineering", "Gothic Architecture","Baroque Architecture", "Modern Architecture","Renaissance Architecture", "Soccer","Basket","Baseball","Golf", "American Geography","Asian Geography","European Geography","African Geography"]
            
            for topic in topics:
            
            \quad seed = randint(0,1000)
            
            \textbf{Prompt}

           \quad messages = [
            
            \quad    \{"role": "system", "content": "You are a helpful assistant. A topic with a random seed is given, please write a knowledge document within the topic."\},
                
            \quad    \{"role": "user", "content": f"topic:\{topic\},seed:\{seed\}"\}
            
            ]
\end{tcolorbox}

\subsection{Prompts for Generating the Initial Queries}
To enhance the diversity of the initial queries while keeping them relevant for the victim documents, we design five different generation prompts and apply them in numeration for the casual-LLM. Content are following:
To enhance the diversity of the initial queries while keeping them relevant for the victim documents, we design five different generation prompts and apply them in numeration for the casual-LLM. Content are following:
\begin{tcolorbox}
        \textbf{Prompt 1}
        
       messages = [
\{"role": "system", "content": "You are a helpful query writer for information retrieval system. A knowledge document content is provided followed by the user. Please generate a query with content in which the human asker is facing a practical problem and the document would be helpful to solve the it."\},
         
         \{"role": "user", "content": doc\}
        ]
  
\end{tcolorbox}
\vspace{-1mm}
\begin{tcolorbox}
    
        \textbf{Prompt 2}
        
        messages = [
            \{"role": "system", "content": "You are a helpful query writer for information retrieval system. A knowledge document content is provided followed by the user. Please generate a query by questioning some factors of the document content."\},
         
         \{"role": "user", "content": doc\}
        ]
  
\end{tcolorbox}
\vspace{-1mm}
\begin{tcolorbox}
          
        \textbf{Prompt 3}
        
        messages = [\{"role": "system", "content": "You are a helpful query writer for information retrieval system. A knowledge document content is provided followed by the user. Please generate a query targeting the document and containing five keywords from the document."\},
        
         \{"role": "user", "content": doc\}
        ]
 
\end{tcolorbox}
\vspace{-1mm}
\begin{tcolorbox}
           
        \textbf{Prompt 4}
        
        messages = [
            \{"role": "system", "content": "You are a helpful query writer for information retrieval system. A knowledge document content is provided followed by the user. Please generate a one-sentence summary for this document."\},
            
         \{"role": "user", "content": doc\}
        ]
   
\end{tcolorbox}
\vspace{-1mm}
\begin{tcolorbox}
         
        \textbf{Prompt 5}
        
        messages = [
            \{"role": "system", "content": "You are a helpful query writer for information retrieval system. A knowledge document content is provided followed by the user. Please generate a query that has similar semantics but contains few overlaps with the document."\},
         
         \{"role": "user", "content": doc\}
        ]
\end{tcolorbox}

\subsection{Prompt for Poisoned-RAG}
The following prompt is what we use for Poisoned-RAG attack:
\begin{tcolorbox}

    messages = [
        \{"role": "system", "content": f"The user will followingly input a query and a document. Your task is to generate \{num\_sts\} adversarial tokens following the end of the document and make the information retrieval system to rank the injected document lower for the given query. Please only output the adversarial tokens without any other explanation."\},
        
        \{"role": "user", "content": f"Query: \{query\}"\},
        
        \{"role": "user", "content": f"Document: \{doc\}"\}
        
    ]
\end{tcolorbox}
\vspace{-4mm}
\section{Implementation and Dataset Details}
The datasets we applied from~\cite{hongjinbright} is illustrated in the following tables:
\begin{table}[h]
    \centering
    \small
    \vspace{-1mm}
    \begin{tabular}{c|c|c|c|c|c}
         \toprule
         Dataset&$|\mathcal{Q}|$&$|\mathcal{D}|$&$|\mathcal{D}+|$&$L_{Q}$&$L_{D}$\\
         \midrule
         Econimics&103&50,220&8.0&181.5&120.2  \\
         Psychology&101&52,835&7.3&149.6&118.2 \\
         Biology&103&57,359&3.6&115.2&83.6 \\
         Robotics&101&61,961&5.5&818.9&121.0 \\
         \bottomrule
    \end{tabular}
    \caption{Dataset Description}
    \label{tab:datasets}
\vspace{-6mm}
\end{table}
Table~\ref{tab:datasets} describes the query amount, document amount, ground-truth document per query amount, average query length and average document length. In our experiment, we first attack every ground-truth document with injected tokens, and then retrieve the top 25/50 documents among the corpus by the query datas. Then we evaluate the Recall and NDCG based on how many ground-truth documents of the query are retrieved in this top samples and how they rank. 

\vspace{-2mm}
\section{Verifying $\epsilon$-$p_{\epsilon}$-Precise on Qwen, Jinaai, and Gemma}
\label{App:topics}
Table~\ref{tab:epiprecise-more} shows the verification results (see Section~\ref{sec:metho}) conducted on Embedding-Gemma-300M and Jinaai-Embeddings-v3. The $p_{\epsilon}$ of both are high enough to validate the truthfulness of the definition and deduction we made in Section~\ref{sec:metho}.
\begin{table}[h]\renewcommand{\arraystretch}{1.2}
\addtolength{\tabcolsep}{-4pt}
    \scriptsize
    \centering
    \begin{tabular}{c|c|c|c|c|c|c|c|c|c|c|c|c|c|c|c|c|c|c|c}
    \toprule
   \multicolumn{2}{c|}{ \multirow{2}{*}{Topic Name}}&\multicolumn{6}{c|}{Qwen1.5-7B-Ins.}&\multicolumn{6}{c|}{Embedding-Gemma-300M}&\multicolumn{6}{c}{Jinaai-Embeddings-v3}\\
          \cline{3-20}\multicolumn{2}{c|}{  }& $p_{0}$ & $p_{0.1}$&$p_{0.2}$&$p_{0.3}$&In Sim.&Out Sim.& $p_{0}$ & $p_{0.1}$&$p_{0.2}$&$p_{0.3}$&In Sim.&Out Sim.& $p_{0}$ & $p_{0.1}$&$p_{0.2}$&$p_{0.3}$&In Sim.&Out Sim. \\
         \hline
         \multirow{4}{*}{Engin.}&Chemical 
         &100 &99.6 &91.7 &87.8 &[0.67,0.90]&[-0.07,0.58]
         &99.6 &88.3 &87.0 &84.8 &[0.66,0.94]&[-0.01,0.70] 
         &100 &100 &99.6 &88.3 &[0.82,0.95]&[-0.02,0.62]
         \\ 
         \cline{2-20}
         &Computer
         &92.6 &87.0 &87.0 &66.5 &[0.49,0.87]&[-0.05,0.63]
         &96.1 &90.9 &87.8 &79.1 &[0.67,0.88]&[0.01,0.80]
         &96.1 &97.0 &92.6 &87.8 &[0.67,0.88]&[0.01,0.80]\\ 
         \cline{2-20}
         &Electric
         &92.6 &89.6 &83.0 &47.8 &[0.50,0.89]&[-0.05,0.63]
         &88.7 &86.5 &69.1 &30.0 &[0.52,0.90]&[-0.00,0.80]
         &100 &97.4 &90.0 &87.4 &[0.75,0.94]&[-0.09,0.70]\\ 
         \cline{2-20}
         &Biology 
         &99.1 &93.9 &88.3 &86.1 &[0.55,0.85]&[-0.09,0.56]
         &99.1 &90.9 &87.4 &84.8 &[0.68,0.93]&[0.01,0.70] &92.2 &87.0 &85.7 &53.5 &[0.53,0.90]&[-0.02,0.62]\\ 
         
         \hline
         \multirow{4}{*}{Movies}&Action
         &89.1 &77.0 &29.1 &0.4 &[0.37,0.87]&[-0.06,0.60]
         &97.4 &90.4 &29.1 &20 &[0.52,0.92]&[0.00,0.63]
         &100 &100 &97.0 &90.4 &[0.73,0.90]&[-0.02,0.60]\\
         \cline{2-20}&Horror
         &100 &94.3 &90.9 &82.2 &[0.59,0.88]&[-0.04,0.59]
         &100 &99.1 &99.1 &89.1 &[0.70,0.93]&[0.00,0.63]
         &100 &100 &99.1 &90.4 &[0.70,0.93]&[0.00,0.63]\\
         \cline{2-20}
         &Romantic
         &88.3 &87.4 &79.6 &28.7 &[0.44,0.88]&[-0.05,0.63]
         &100 &96.5 &79.6 &55.2 &[0.59,0.91]&[0.01,0.58]
         &99.6 &91.3 &87.4 &79.1 &[0.56,0.92]&[-0.05,0.58]\\
         \cline{2-20}
         &Comedy
         &87.0 &72.6 &33.9 &0.0 &[0.32,0.82]&[-0.07,0.63]
         &81.7 &53.9 &33.9 &0.0 &[0.36,0.87]&[-0.05,0.55]
         &83.9 &23.0 &0.0 &0.0 &[0.25,0.76]&[-0.09,0.58]\\
         
         \hline
         \multirow{4}{*}{Achite.}&Baroque
         &99.6 &94.3 &91.3 &87.4 &[0.70,0.94]&[-0.03,0.72]
         &100.0 &100 &98.3 &91.7 &[0.80,0.95]&[0.02,0.64]
         &100 &93.5 &91.3 &87.8 &[0.75,0.92]&[-0.02,0.71]\\
         \cline{2-20}&Gothic
         &100 &96.1 &91.3 &87.0 &[0.71,0.91]&[-0.05,0.71]
         &100 &100 &99.6 &92.2 &[0.81,0.93]&[-0.02,0.62]
         &99.6 &95.2 &90.9 &86.5 &[0.71,0.94]&[-0.14,0.71]\\
         \cline{2-20}&Renaissance
         &100 &90.9 &87.0 &87.0 &[0.64,0.89]&[-0.02,0.72]
         &100 &99.6 &92.6 &87.8 &[0.74,0.91]&[-0.00,0.64]
         &100 &97.8 &91.3 &87.0 &[0.76,0.90]&[-0.03,0.71]\\
         \cline{2-20}
         &Modern &100 &100 &87.8 &86.5 &[0.65,0.88]&[-0.03,0.54]
         &100 &100 &100 &89.1 &[0.74,0.90]&[0.01,0.52]
         &100 &100 &90.0 &82.6 &[0.65,0.91]&[-0.07,0.54]\\

         \hline
         \multirow{4}{*}{Geogra.}&American
         &84.8 &78.7 &50.4 &5.2 &[0.40,0.86]&[-0.07,0.63]
         &95.2 &86.5 &77.4 &47.4 &[0.56,0.85]&[-0.03,0.62]
         &99.6 &88.7 &81.7 &68.7 &[0.61,0.92]&[-0.02,0.62]\\
         \cline{2-20}&European 
         &92.2 &87.0 &86.5 &78.3 &[0.57,0.89]&[-0.04,0.69]
         &96.1 &89.6 &86.1 &53.0 &[0.57,0.87]&[-0.03,0.65]
         &99.6 &90.0 &80.0 &50.0 &[0.60,0.91]&[-0.04,0.60]\\
         \cline{2-20}&Asian 
         &89.1 &87.0 &87.0 &80.0 &[0.54,0.85]&[-0.06,0.68]
         &97.4 &89.1 &87.0 &78.7 &[0.59,0.85]&[-0.04,0.66]
         &100 &96.5 &87.4 &83.5 &[0.68,0.93]&[-0.02,0.62]\\
         \cline{2-20}&African 
         &91.3 &87.4 &87.0 &83.0 &[0.58,0.85]&[-0.09,0.69]
         &99.6 &89.1 &87.0 &84.8 &[0.61,0.86]&[-0.05,0.66]
         &100 &91.7 &87.8 &80.9 &[0.61,0.92]&[-0.02,0.62]\\
         
         \hline
         \multirow{4}{*}{Sports}&Soccer
         &100 &97.8 &97.8 &91.7 &[0.71,0.89]&[-0.09,0.64]
         &98.3 &92.6 &78.7 &19.1 &[0.50,0.91]&[0.01,0.54]
         &100 &98.3 &94.8 &90.9 &[0.69,0.92]&[-0.02,0.61]\\ 
         \cline{2-20}&Baseball
         &100 &93.0 &91.7 &91.7 &[0.66,0.92]&[-0.05,0.64]
         &93.5 &83.5 &50.0 &0.03 &[0.47,0.87]&[0.04,0.54]
         &95.7 &90.4 &88.7 &63.5 &[0.57,0.89]&[-0.14,0.61]\\ 
         \cline{2-20}&Golf
         &90.9 &90.0 &68.3 &15.7 &[0.44,0.91]&[-0.08,0.63]
         &100 &98.3 &87.4 &49.6 &[0.60,0.90]&[-0.00,0.52]
         &100 &99.1 &93.5 &87.0 &[0.66,0.88]&[-0.05,0.60]\\ 
         \cline{2-20}&Basketball
         &87.8 &82.6 &13.9 &0.0 &[0.31,0.59]&[-0.06,0.61]
         &100 &83.5 &83.5 &40.4 &[0.45,0.78]&[-0.05,0.43]
         &92.6 &58.7 &16.5 &0.0 &[0.36,0.82]&[-0.04,0.58]\\ 
         
         \hline
         \multirow{4}{*}{Politics}&American
         &22.2 &0.0 &0.0 &0.0 &[0.44,0.81]&[-0.04,0.62]
         &88.3 &67.8 &37.4 &0.05 &[0.47,0.91]&[-0.03,0.66]
         &0.0 &0.0 &0.0 &0.0 &[0.0,0.90]&[-0.05,0.61]\\
         \cline{2-20}&Japanese 
         &92.2 &88.3 &80.4 &36.5 &[0.44,0.81]&[-0.04,0.62]
         &87.4 &53.0 &3.9 &36.5 &[0.36,0.85]&[-0.05,0.66]
         &97.4 &89.1 &87.0 &78.3 &[0.59,0.91]&[-0.02,0.64]\\
         \cline{2-20}
         &Chinese 
         &1.3 &0.0 &0.0 &0.0 &[0.05,0.77]&[-0.06,0.55]
         &0.07 &0.0 &0.0 &0.0 &[0.16,0.92]&[0.01,0.66]
         &90.0 &86.5 &73.5 &46.1 &[0.51,0.89]&[-0.04,0.64]\\
         \cline{2-20}&European 
         &76.5 &30.9 &1.7 &0.0 &[0.28,0.87]&[-0.06,0.63]
         &0.0 &0.0 &0.0 &0.0 &[0.01,0.87]&[0.01,0.65]
         &100 &93.4 &88.7 &79.1 &[0.63,0.89]&[-0.01,0.61]\\
         \bottomrule
    \end{tabular}
    \caption{24 context topics' $p_{\epsilon}$(\%) on Qwen1.5-7B-Ins., Embedding-Gemma-300M, and Jinaai-Embeddings-v3 retrievers.  "In Sim" describes the minimum and maximum of similarity between every pair within the topic corpus (10$\times$10), and "Out Sim" for every pair of one within the topic and one in other 23 topics (10$\times$230).}
    \vspace{-5mm}
    \label{tab:epiprecise-more}
\end{table}


\end{document}

%% file: section_camera/0_abstract.tex
\begin{abstract}
\vspace{-2mm}
Large language models (LLMs) have been serving as effective backbones for retrieval systems, including Retrieval-Augmentation-Generation, Information Retrieval, and Agent Memory Retrieval. Recent studies have demonstrated that such LLM-based Retrieval (LLMR) is vulnerable to adversarial attacks, which manipulates documents by token-level injections and enables adversaries to boost or diminish these documents in retrieval tasks. However, existing studies mainly (1) presume a known query to the attacker, and (2) highly rely on access to the victim model's parameters or  interactions, which are hardly accessible in real-world scenarios, leading to limited validity. 
To further explore the secure risks of LLMR, we propose a practical black-box attack method that generates transferable injection tokens based on zero-shot surrogate LLMs without need of victim queries or victim models knowledge. The effectiveness of our attack raises such a robustness issue that similar effects may arise from benign or unintended document edits in the real world.
To achieve our attack, we first establish a theoretical framework of LLMR and empirically verify it. Under the framework, we simulate the transferable attack as a min-max problem, and propose an adversarial learning mechanism that finds optimal adversarial tokens with learnable query samples. Our attack is validated to be effective on benchmark datasets across popular LLM retrievers.
\vspace{-8mm}
\end{abstract}

%% file: section_camera/1_introduction.tex
\section{Introduction}
\vspace{-2mm}
\begin{figure}
    \centering
    \includegraphics[width=0.95\linewidth]{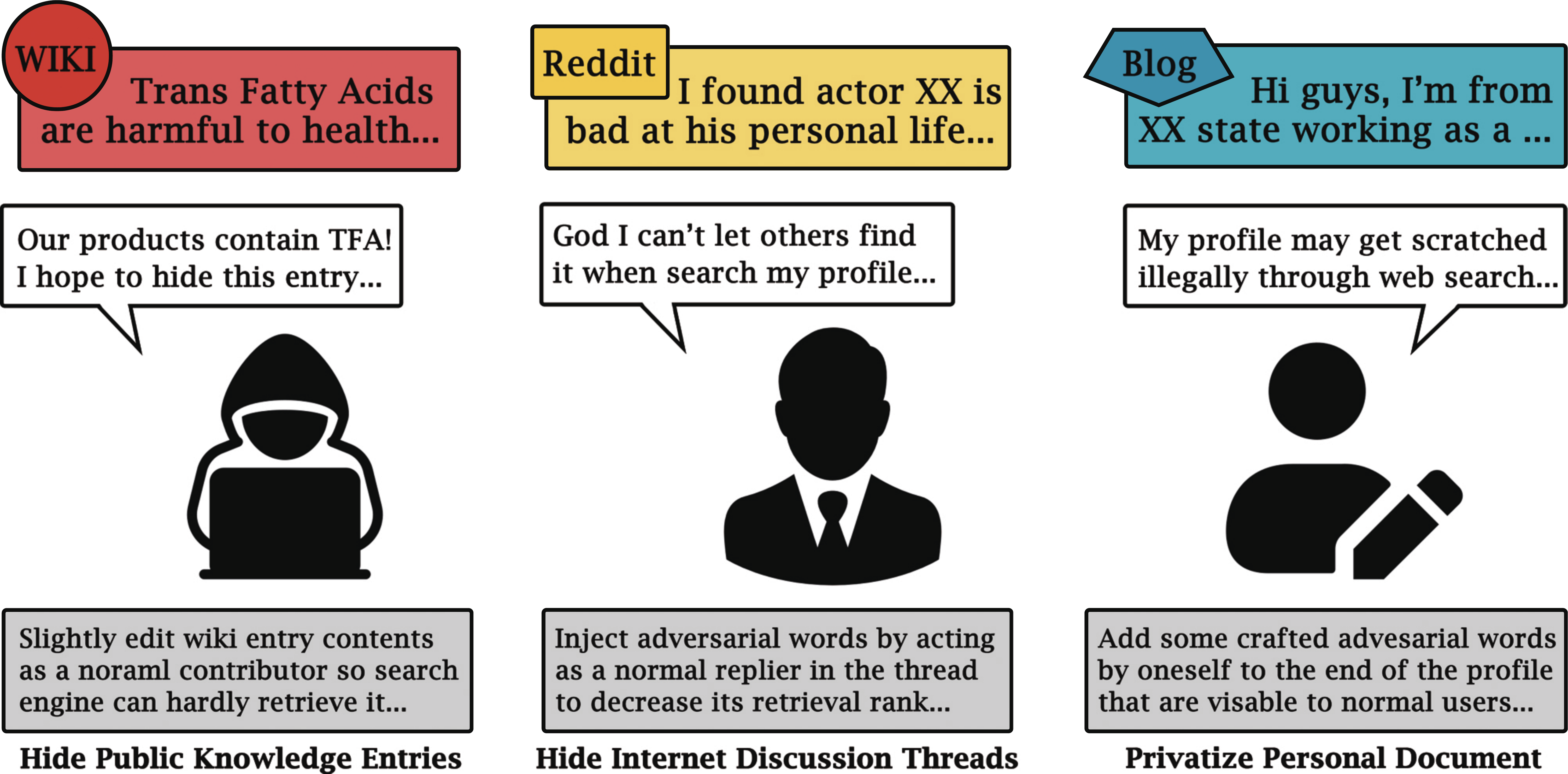}
    \caption{In many practical scenarios, attackers may hope to hide web documents from retrieval systems. These websites usually allow normal public users to edit in format of content contribution or discussion replies.}
    \label{fig:scenarios}
    \vspace{-7mm}
\end{figure}

Retrieval system, which aims to efficiently seek most relevant documents for given user queries, not only occupies great importance in applications like search engines and recommendation systems, but also plays a vital role in knowledge techniques like Retrieval-augmented Generation (RAG)~\citep{zhu2025large} and Information Retrieval (IR)~\citep{10.1145/3637870}. With the advance of large-language models (LLMs)~\citep{yang2025toward,chang2025survey,yang2025maestro}, retrieval method has extended a new series of strong backbones and stepped into a new stage of high performance, yet also posing potential secure risks in the face of adversarial attacks. These risks could be utilized to adversarially diminish useful information~\cite{wang2022bert,wu2023prada}, trigger harmful content by LLM~\cite{xiang2024certifiably,zou2025poisonedrag} and stealthily manipulate commercial recommendations~\cite{kumar2024manipulating,tang2025stealthrank,hu2025dynamicsadversarialattackslarge}. In this work we mainly focus on scenarios where the attacker aims to \textbf{hide a victim document from LLM-based Retrieval (LLMR)}, whether in adversarial desires or reasonable intentions (Figure~\ref{fig:scenarios}).   

The above observations have drawn attention to the vulnerability of LLM retrievers, including designs for adaptive adversarial attack methods. However, existing attack methods share two main limitations: 
(1) Most works focus on white-box settings~\citep{kumar2024manipulating,pfrommer2024ranking,tang2025stealthrank,xing2025llmsreliablerankersrank,du2026multimodalgenerativeengineoptimization} where the attacker has full access to the model's parameters, which makes such a threat unconvincing in real-world applications. A few studies~\citep{wu2023prada,MGAttack} have explored black-box attack in small NLP models like Bert, yet in lack of effectiveness validated (shown in Section~\ref{sec:Experiments}) for advanced retrieval systems today like LLMR and in need of abundant interactions with the victim system, which is hard to achieve in the real world; (2) In these problem settings, the attacker always has the knowledge of the victim user's query, and sometimes the document corpus of the victim retriever as well. However, in practical scenarios, it's nearly impossible for the attacker to have access to the hidden documents corpus, which is held by the retrieval system behind, or to perturb the document immediately after receiving the explicit user's query at test-time. These limitations in practicability largely reduce existing works' impact on understanding the vulnerability of LLMR, and further drive us to raise the question: "\textit{How vulnerable are LLM retrievers in the face of black-box adversarial attacks in practical scenarios}?"  

To answer this question and address limitations of existing studies, we propose a complete \textit{black-box} attack method, which learns transferable tokens on a surrogate LLM with only knowledge of victim documents, and particularly focuses on absenting victim information sources. To be specific, (1) We first establish a theoretical framework based on empirical observations, and analyze the three factors for achieving black-box transferable attack: derivation of victim document embedding, precision of surrogate LLM and precision of the target LLM; (2) To yield sufficient topic derivation, our attack samples potential user queries utilizing casual LLMs, and performs an adversarial learning between the sampled query and the victim documents on the surrogate model. (3) To further strengthen optimality, we adapts two technical tricks: imbalance on query population and word-embedding as surrogates.  
Compared with existing attacks, our method have three main advantages: \textbf{(1) Practical attacker capacity settings}, which simulate a potential attacker that can only read and slightly edit the knowledge documents as a normal user (e.g., Wikipedia contributor, reddit thread commenter) without any knowledge of retriever models, builders or users;  \textbf{(2)Theoretical insights on attack transferability}, which is methodological deducted based on our established framework of \textit{topic embedding clusters} of LLMs and formulated into necessary conditions for achieving transferable attacks on surrogate models; \textbf{(3) Zero-shot transferable attack over different LLM retrievers}, which doesn't require high interactions with a specific victim model like other black-box attacks, and once learnt, could be destructive against various LLM retrievers globally. In other words, once a document is polluted by our injected tokens, all retrievers would be possibly influenced to retrieve it at the same time.

This paper's contributions are summarized as follows:
\begin{enumerate}
\item  We first investigate the vulnerability of LLMR in the face of \textbf{query-agnostic black-box settings}, where the attacker has no access to the victim model, the document corpus, or user queries in any kind of format. 
\item We establish a \textbf{theoretical framework} on LLM retrievers with empirical verification, and propose an adversarial learning method between victim documents and sampled query candidates on a surrogate model with \textbf{zero-shot transferability}.
\item Our method is validated on benchmark datasets to be \textbf{globally effective across various} victim LLM retrievers with the \textbf{same} injected tokens for a specific document, and outperforms other existing attack baselines.
\end{enumerate}

%% file: section_camera/2_background.tex
\section{Preliminaries and Problem Formulation}
\label{sec:formatting}

\begin{figure}
    \centering
    \includegraphics[width=0.95\linewidth]{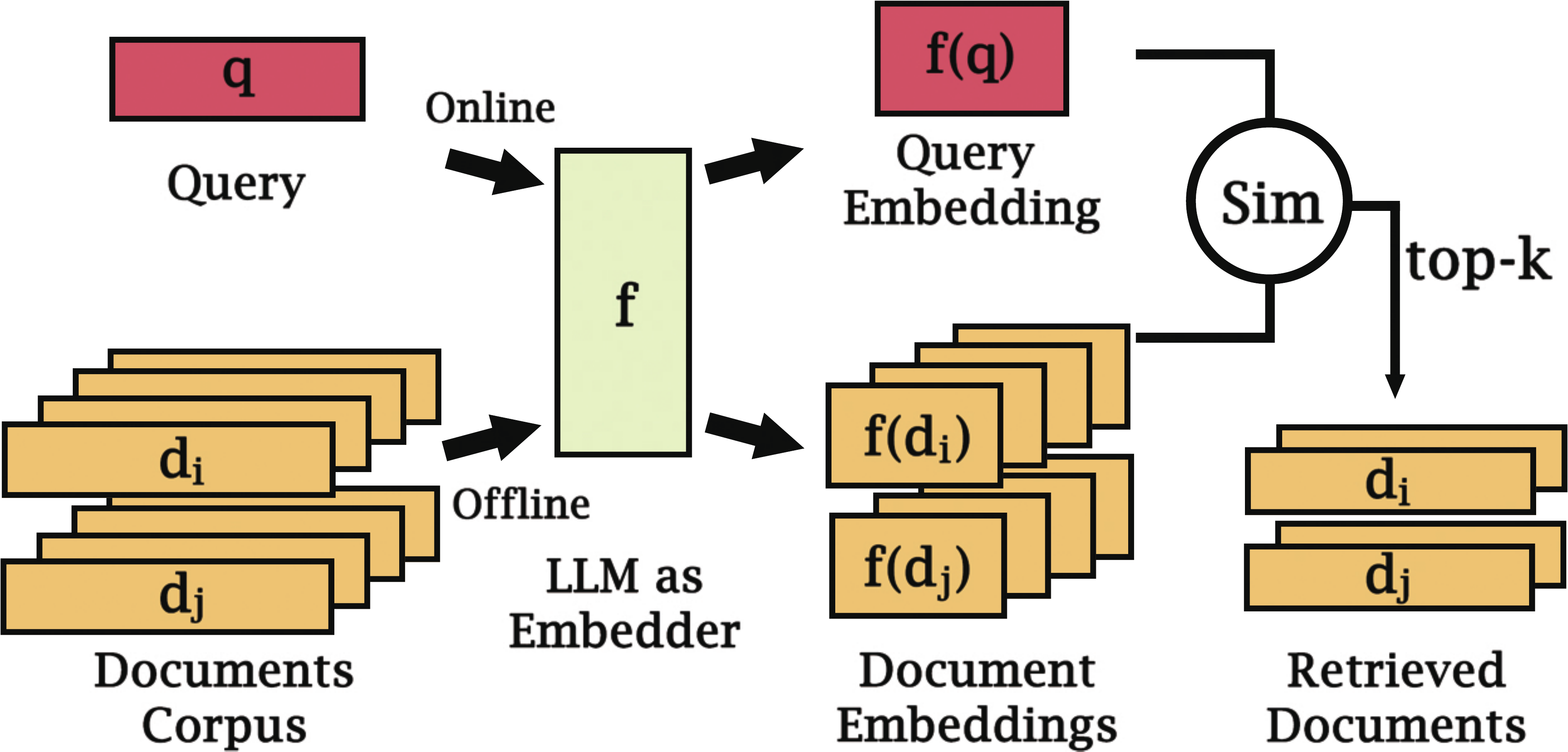}
    \caption{Illustration of LLM-based Retrieval. Documents in the corpus are firstly embedded in the last-hidden embeddings and stored. When a user query comes and get embedded, it matches relevant documents in embedding similarity in high efficiency.}
    \label{fig:illustrate}
    \vspace{-5mm}
\end{figure}
\paragraph{LLM as Retrievers} Nowadays advancing large language models are mostly built on transformer architectures. For simplification of notation, in this paper we skip the setting of tokenizers, and represent all context datas (document or query) as a series of tokens. When a context $X\in\mathbb{R}^{L}$ is required for retrieval embedding on LLM $f$, it first forward into a word-embedding layer $f_{0}:\mathbb{R}\rightarrow\mathbb{R}^{m_{0}}$:
$$\textbf{X}_{0} = f_{0}(X) $$
The processed $\textbf{X}_{0}\in \mathbb{R}^{L\times m_{0}}$ is called word-embeddings (later used in Section~\ref{sec:word-embed}). $\mathbf{X}_{0}$ afterwards is forward into $T$ transformer blocks $f_{[T]} = \{f_{1},\dots,f_{T}\}$:
$$\mathbf{X}_{i}=f_{i}(\mathbf{X}_{i-1}), \forall i \in [1,T]$$
$\mathbf{X}_{[T]}\in \mathbb{R}^{L\times m}$ are described as the hidden in each block. When applying retrieval tasks, the last hidden $\mathbf{X}_{T}$ is adopted for the retrieval embedding through a last-hidden-pool function:
$$f(X)=\text{hidden-pool}(\mathbf{X}_{T})$$
Depending on different settings, the pool function is usually a weighted sum in final attention or simply the last token hidden $\mathbf{X}_{T,L}$. When evaluate the relevance of two context $X_{i}$ and $X_{j}$, a statistic similarity function $\text{sim}$ (usually cosine-similarity) is applied to $f(X_{i})$ and $f(X_{j})$. This statistic evaluation allows the retrieval system to preprocess the embedding of the documents (usually in large amount), and only forward LLM once for the query when a user inquiries. This ensures the efficiency of the LLMR system when holding millions of documents prepared for retrieval.

\paragraph{Token Injection Attack} Token injection attack is one of the most popular approaches in LLM adversarial attack. For a given context $d$, the attacker injects a few attack tokens at the end of $d$ to achieve the LLM's performance decrease or unsafe behaviors like jailbreaks. Here we mainly introduce Greedy Coordinate Gradient (GCG), a white-box optimizable injection attack, for our information retrieval attack purpose. Specifically, the document $d$ is firstly injected with $m$ initial tokens (such as multiple $*$) as the initial $d'$: 
$$d' = d\oplus t_{[m]},\{t_{i}=\text{``*"}\}$$
Then the attacker loops greedy learning steps to optimize the injected tokens regarding a loss $\mathcal{L}$ depends on a specific task performed on $d'$. At each greedy step, $\mathcal{L}(g(d'))$ is backward to obtain the each token's gradients at the word embedding layer, which correspond to a specific gradient for every potential word $w$ (or token) in every $t_{i}$'s position:
$$\partial w^{i}=\frac{\partial \mathcal{L}(g(d'))}{\partial t_{i} } |_{t_{i}=w}, \forall w \in \mathcal{W}$$
$\mathcal{W}$ defines the whole vocabulary set that the victim LLM supports. Then the attacker pick up $k$ words $w_{[k]}^{i}$ that occupy top-$k$ greatest gradients $\partial w^{i}$ for every injected token $t_{i}$, and randomly choose one to replace the original one. Such a replacement constructs a greedily optimized sample $\{w_{i}\}$. The attacker makes several such samples, and eventually picks the one with lowest loss as the new $d'$. 
$$t_{i}' \leftarrow w_{i}\in\arg\min_{\{w_{i}\}}\mathcal{L}(g(d\oplus\{w_{i}\}))$$
Note although GCG requires gradient backpropagation, the optimization is still a heuristic search in discrete forms.

\paragraph{Problem Formulation} Given a LLM model serving as a embedding function $f$, for a document $d$ belongs to a document corpus $D$, and a query document, the attacker manipulates the document to a perturbed version $d'$, which could be hard for the model $f$ to pick as top-$k$ from $D'=D\cup\{d'\} \setminus\{d\}$ in a embedding similarity order. We emphasize that in a \textbf{complete black-box} scenario, the capacity of attacker could be relatively limited:
\begin{enumerate}
\vspace{-3mm}
    \item The query $q$ is unknown to the attacker, since it's performed as a post-attack action by unknown victim user;
     \vspace{-5mm}
    \item The LLM retriever model $f$ is a complete black-box, which means the attacker has no knowledge of $f$ including architecture and parameters, nor access to query it;
     \vspace{-4mm}
    \item The document corpus $D$ is hidden to the attacker. The candidate documents are previously collected and hold by the retrieval system in an unknown manner;
\vspace{-2mm}
    \item The injected tokens applied to $d$ is limited in amount, which could be expressed as a distance constraint of $d$ and $d'$ with the function $\text{dis}$\footnote{$\text{dis}(A,B) = |B|-|A|\text{ if } A\sqsubseteq B; \infty \text{ otherwise.}$} and threshold $\delta$.
\end{enumerate}
\vspace{-3mm}
With the above settings, we formulate our attack problem as followed:
\vspace{-2mm}
\begin{equation} \label{eq:tar}
\begin{aligned}
\arg \min_{d'} \mathbb{I}(d'\in \text{top}_{k}(\text{sim}(f(q),f(D'))))\\ 
    s.t.,D'=D\cup\{d'\} \setminus\{d\},\text{dis}(d,d')\leq\delta\\
\end{aligned}
\end{equation}
\vspace{-5mm}

%% file: section_camera/3_method.tex
\section{Query-Document Adversarial Learning}

\begin{figure}
    \centering
    \vspace{-1mm}
    \includegraphics[width=0.80\linewidth]{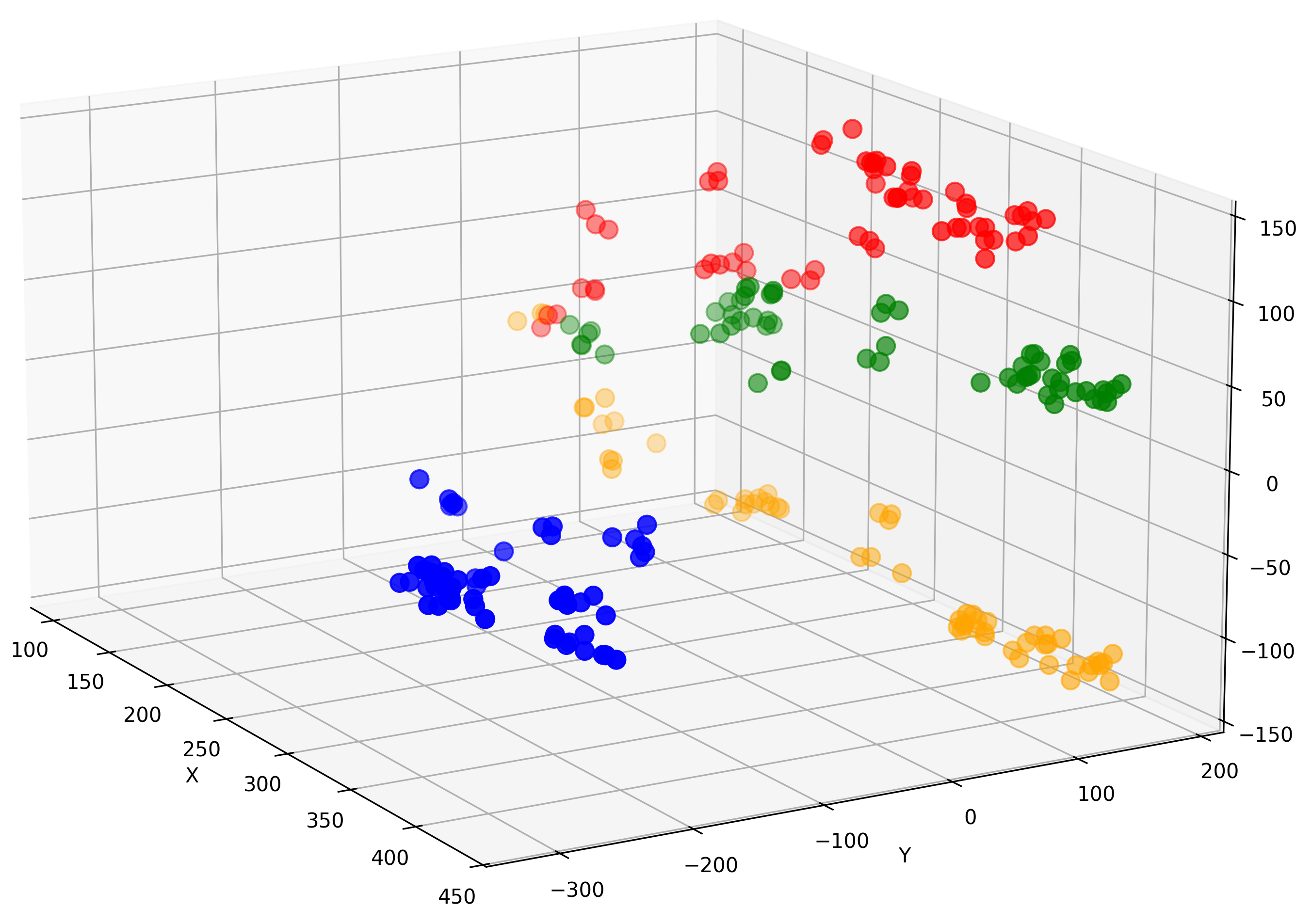}
    \vspace{-2mm}
    \caption{We sample 40 knowledge contexts on each of four roughly-defined topics and visualize their LLMR embeddings by Principal Component Analysis (PCA) reduction. Embeddings of contexts within the same topic (\textbf{\textcolor{red}{R}}:{science}, \textbf{\textcolor{green}{G}}:{politic}, \textbf{\textcolor{yellow}{Y}}:{movie}, \textbf{\textcolor{blue}{B}}:{architecture}) tend to cluster together in the embedding space. }
    \label{fig:cluster}
    \vspace{-
    4mm}
\end{figure}

The main difficulty we faced in the stated problem  is the black-box setting of victim LLM retriever, which gives neither parameters nor abundant query access. Therefore, we look for ways to learn our injection attack on on a surrogate LLM retriever. However, due to their high quantities in parameters, it is hard to build transferability between LLM models without enough insights. This hard-point draws us to do prior experiments to study the pattern of LLM embeddings, where we find a interesting yet rational phenomenon called \textbf{topic cluster}: the LLM-generated embeddings  of context datas in the same topics tend to cluster, as shown in Figure~\ref{fig:cluster}. Based on this observation, (1) we first raise a theoretical definition called "$\epsilon-p_{\epsilon}$-Precise" retriever to quantify how LLM assembles the embeddings in same topics for the retrieval task, and verify this definition by simulation experiments. (2) Then utilizing this defined property, we are able to make methodology deduction on the transferability of the surrogate attack, and equalize it into a min-max problem. (3) Aiming to solve the min-max problem, we therefore design our Query-Document Adversarial Learning by optimizing both victim documents and sampled queries in an adversarial learning way. (4) Finally, we empirically find that by adopting the word-embedding for attack instead of last-hidden on LLM, the transferable attack works more effectively, which helps to strengthen our method.

\subsection{Definition of $\epsilon$-$p_{\epsilon}$-Precise Retriever}
\begin{table}[]\renewcommand{\arraystretch}{1.2}
    \centering
\addtolength{\tabcolsep}{-3.5pt}
    \scriptsize
    \begin{tabular}{c|c|c|c|c|c|c|c}
         \toprule
         \multicolumn{2}{c|}{Topic Name}& $p_{0}$ & $p_{0.1}$&$p_{0.2}$&$p_{0.3}$&In Sim.&Out Sim. \\
         \hline
         \multirow{4}{*}{Engin.}&Chemical &100\%&99.6\%&91.7\%&87.8\%&[0.67,0.90]&[-0.07,0.58]\\ 
         \cline{2-8}
         &Computer&92.6\%&87.0\%&87.0\%&66.5\%&[0.49,0.87]&[-0.05,0.63]\\ 
         \cline{2-8}
         &Electric&92.6\%&89.6\%&83.0\%&47.8\%&[0.50,0.89]&[-0.05,0.63]\\ 
         \cline{2-8}
         &Biology &99.1\%&93.9\%&88.3\%&86.1\%&[0.55,0.85]&[-0.09,0.56]\\ 
         
         \hline
         \multirow{4}{*}{Movies}&Action&89.1\%&77.0\%&29.1\%&0.4\%&[0.37,0.87]&[-0.06,0.60]\\
         \cline{2-8}
         &Horror&100\%&94.3\%&90.9\%&82.2\%&[0.59,0.88]&[-0.04,0.59]\\
         \cline{2-8}
         &Romantic&88.3\%&87.4\%&79.6\%&28.7\%&[0.44,0.88]&[-0.05,0.63]\\
         \cline{2-8}
         &Comedy&87.0\%&72.6\%&33.9\%&0.0\%&[0.32,0.82]&[-0.07,0.63]\\
         
         \hline
         \multirow{4}{*}{Achite.}&Baroque&99.6\%&94.3\%&91.3\%&87.4\%&[0.70,0.94]&[-0.03,0.72]\\
         \cline{2-8}
         &Gothic&100\%&96.1\%&91.3\%&87.0\%&[0.71,0.91]&[-0.05,0.71]\\
         \cline{2-8}
         &Renaissance&100\%&90.9\%&87.0\%&87.0\%&[0.64,0.89]&[-0.02,0.72]\\
         \cline{2-8}
         &Modern &100\%&100\%&87.8\%&86.5\%&[0.65,0.88]&[-0.03,0.54]\\

         \hline
         \multirow{4}{*}{Geogra.}&American&84.8\%&78.7\%&50.4\%&5.2\%&[0.40,0.86]&[-0.07,0.63]\\
         \cline{2-8}
         &European &92.2\%&87.0\%&86.5\%&78.3\%&[0.57,0.89]&[-0.04,0.69]\\
         \cline{2-8}
         &Asian &89.1\%&87.0\%&87.0\%&80.0\%&[0.54,0.85]&[-0.06,0.68]\\
         \cline{2-8}
         &African &91.3\%&87.4\%&87.0\%&83.0\%&[0.58,0.85]&[-0.09,0.69]\\
         \bottomrule
    \end{tabular}
    \vspace{+0.5mm}
    \caption{Context topics' $p_{\epsilon}$ on Qwen1.5-7B-Ins. retriever. "In Sim" describes the minimum and maximum of similarity between every pair within the topic corpus (10$\times$10), and "Out Sim" for every pair of one within the topic and one in other 23 topics (10$\times$230). Full results on 24 topics and 3 LLMs are reported in Appendix~\ref{App:topics}.}
    \vspace{-5mm}
    \label{tab:epiprecise}
\end{table}

Under the complete black-box setting, we don't know the specific inner properties of $f$, so we instead restrict its behavior as a well-performed retriever by assuming it always generates similar embeddings for contexts in the same topic while dissimilar ones for others:
\begin{definition}[$\epsilon$-$p_{\epsilon}$-Precise Retriever]\label{def:epi} Given an global text set $\mathbb{X}$, a subset $\mathcal{X}\subseteq\mathbb{X}$ which contain contexts $\{X_{1},X_{2},...\}$ within a same topic distribution, not limiting to query or document, and an complementary context set $\mathcal{X}'=\mathbb{X}-\mathcal{X}$ which consists of remaining contexts  not in same topics with $\mathcal{X}$, we define a retriever to be $\epsilon$-$p_{\epsilon}$-Precise over $\mathcal{X}$ if there are at least $p_{\epsilon}$ fraction of $\mathcal{X}'$ (note as $\mathcal{X}'_{\epsilon}$) satisfying:
\begin{equation}
\begin{aligned}
\text{sim}(f(X_{i}),f(X_{j}))\geq \text{sim}&(f(X_{k}),f(X'))+\epsilon, \\
\forall X_{i,j,k}\in\mathcal{X}, &X'\in \mathcal{X}_{\epsilon}
\label{eq:epiprecise}
\end{aligned}
\end{equation}
\end{definition}
The number $\epsilon \in [0,1]$ describes a similarity gap between documents within and without the topics in probability of $p_{\epsilon}$. To verify the rationality of this statement, we conduct an experiment: we utilize a Casual LLM to generate 240 knowledge documents in 24 different topics (10 for each), and collect their embeddings on a LLM retriever. Then we test each topic's $p_{0}$, $p_{0.1}$, $p_{0.2}$ and $p_{0.3}$ against other 23 topics as Definition~\ref{def:epi} described. Part of the result is reported in Table~\ref{tab:epiprecise}, where most of the topics exhibit in a tight embedding cluster with considerable $p_{\epsilon}$, while the full result is reported in Appendix~\ref{App:topics}. This result effectively supports our assumption to be truthful.
Under this definition, we assume the victim model $f$ is $\epsilon_{f}$-$p_{\epsilon}^{f}$-Precise over the topic $\mathcal{X}$ containing $d$ and $q$, and we have access to a surrogate model $g$ that is $\epsilon_{g}$-$p_{\epsilon}^{g}$-Precise over $\mathcal{X}$. With these assumptions, we then methodologically solve our attack problem.

\subsection{Methodology for Adversarial  Learning}\label{sec:metho}

\begin{figure}
    \centering
    \includegraphics[width=0.8\linewidth]{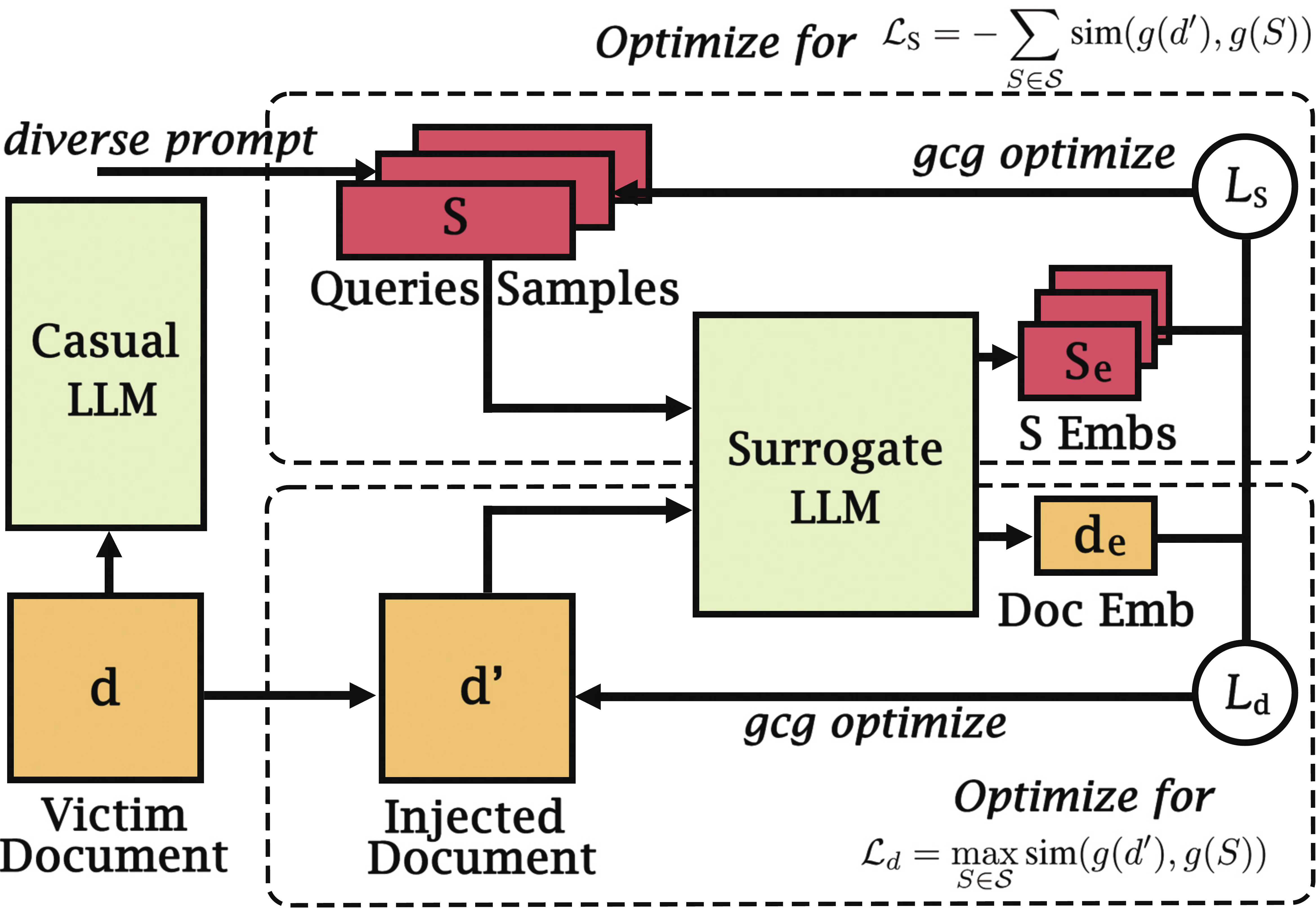}
    \caption{The DQ-A learning pipeline of our attack method. Query samples are first generated by a third party casual LLM. Then in every learning step, injected document tokens are first optimized away from queries, and all queries tokens are optimized towards the document. Both surrogate and Casual LLMs require no learning. }
    \label{fig:pipeline}
\end{figure}
First noticing the $D$ is unknown to us, and we also have no access to other document in $D$. Therefore, we transfer the attack target in Eq.\ref{eq:tar} to minimize the similarity between $d'$ and $q$:
\begin{equation}\label{eq:argQ}
    \arg\min_{d'} \text{sim}(f(d'),f(q)) 
\end{equation}
Since we don't know the specific ground truth $q$ that the victim user uses, we instead surrogate Eq.\ref{eq:argQ} by an upper bound:
\begin{equation}\label{eq:max}
\begin{aligned}
  \max_{X\in\mathcal{X}}\text{sim}(f(d'),f(X))\geq  \text{sim}(f(d'),f(q)) 
\end{aligned}
\end{equation}
Therefore, we seek to optimize $\max_{X\in\mathcal{X}} \text{sim}(f(d'),f(X))$ instead. Despite serving as an approximate bound of target query, we further analysis how this max term benefits the transferable of attack when we conduct on the surrogate $g$. 
\begin{theorem}[Transferability of Attack on Victims]\label{lem:Trans} Assume $f$ and $g$ serve as two independent retrieval functions. After adversarial optimization, if $\max_{x\in\mathcal{X}} \text{sim}(g(d'),g(X))$ is optimized to no greater than $\min_{X_{i},X_{j}\in\mathcal{X}} \text{sim}(g(X_{i}),g(X_{j}))-\epsilon_{g}$ with a positive gap $\epsilon_{g}$, 
then $d'$ could be identified as $d'\in \mathcal{X}'$, which gives a bounded decrease on embeddings similarity of $f$ in probability of $p_{\epsilon}^{g}$. This transferability is formalized as:
\begin{equation}\label{eq:trans}
\small 
\begin{aligned}
\underbrace {\max_{x\in\mathcal{X}} \text{sim}(g(d'),g(X)}_{\text{Optimization Objective}} &\leq \underbrace {\min_{X_{i},X_{j}\in\mathcal{X}} \text{sim}(g(X_{i}),g(X_{j}))-\epsilon_{g}}_{\text{Lower Bound for Embeddings Within $q$'s Topic}}\\
\overset{p_{\epsilon}^{f} }{\implies} \underbrace{\text{sim}(f(d'),f(q))}_{\text{Transferred Attack Effect}}&\leq\underbrace{\min_{X_{i},X_{j}\in\mathcal{X}} \text{sim}(g(X_{i}),g(X_{j}))-\epsilon_{f}}_{\text{Upper Bound for Embeddings Outside $q$'s Topic}}
\end{aligned}
\end{equation}
\end{theorem}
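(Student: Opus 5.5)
The argument has two steps. First we use the surrogate $g$ to show that $d'$ lies outside the topic $\mathcal{X}$. Then we use the assumed $\epsilon_f$-$p_{\epsilon}^{f}$-precision of the victim $f$ (Definition~\ref{def:epi}) to turn that membership into a bound on $\text{sim}(f(d'),f(q))$.

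\textbf{Step 1.} We work on $g$. Suppose $d'$ satisfied the within-topic similarity pattern of $\mathcal{X}$ under $g$. Then it would attain similarity at least $\min_{X_i,X_j\in\mathcal{X}}\text{sim}(g(X_i),g(X_j))$ with the members of $\mathcal{X}$. This contradicts the hypothesis, which says $\max_{X\in\mathcal{X}}\text{sim}(g(d'),g(X))$ lies strictly below that minimum, by a positive gap $\epsilon_g$. We therefore conclude $d'\in\mathcal{X}'$. This step is essentially a contrapositive reading of the cluster picture. Strictly, it needs a convention that a text belongs to $\mathcal{X}$ only if it meets the within-topic similarity floor, and I would state that convention explicitly as the operational meaning of ``same topic''.

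\textbf{Step 2.} We transfer to $f$. Treating $f$ as independent of $g$, the event $d'\in\mathcal{X}'$ tells us nothing special about $f$'s geometry beyond the precision assumption. So $d'$ behaves like a generic element of $\mathcal{X}'$, and it lands in the good subset $\mathcal{X}'_{\epsilon}$ with probability at least $p_{\epsilon}^{f}$. On that event, Eq.~\eqref{eq:epiprecise} holds with $X_k=q\in\mathcal{X}$ and $X'=d'$ for every pair $X_i,X_j\in\mathcal{X}$. Hence
\[
\text{sim}(f(d'),f(q))\le\max_{X\in\mathcal{X}}\text{sim}(f(d'),f(X))\le\min_{X_i,X_j\in\mathcal{X}}\text{sim}(f(X_i),f(X_j))-\epsilon_f,
\]
where the first inequality is Eq.~\eqref{eq:max}. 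This gives the right-hand side of Eq.~\eqref{eq:trans}, read with the within-topic minimum taken under $f$.

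\textbf{Main obstacle.} The difficulty is in Step 2. Definition~\ref{def:epi} is a statement about a fraction of a fixed set $\mathcal{X}'$. To treat the adversarially produced $d'$ as a uniformly random draw from that set, we need the independence assumption between $f$ and $g$: optimizing against $g$ must not correlate with membership in $f$'s exceptional set $\mathcal{X}'\setminus\mathcal{X}'_{\epsilon}$. I would make this the explicit modeling assumption behind ``$f$ and $g$ independent''. The probability in the conclusion then comes from $f$'s precision, so it is $p_{\epsilon}^{f}$, matching the $\overset{p_{\epsilon}^{f}}{\implies}$ in Eq.~\eqref{eq:trans}. Finally, I would note that the bound on the right of Eq.~\eqref{eq:trans} should involve similarities under $f$ rather than $g$, and prove it in that form.
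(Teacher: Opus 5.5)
Your proposal is correct and follows the paper's own two-step proof. First you infer $d'\notin\mathcal{X}$ from the $g$-gap; then you treat $f$ as independent of $g$ and apply its $\epsilon_f$-$p_{\epsilon}^{f}$ precision to bound $\text{sim}(f(d'),f(q))$ through $\max_{X\in\mathcal{X}}\text{sim}(f(d'),f(X))$, with the probability $p_{\epsilon}^{f}$ exactly as in the paper. The paper's proof likewise ends with a bound under $f$ rather than $g$, but in the weaker form $\max_{X_i,X_j\in\mathcal{X}}\text{sim}(f(X_i),f(X_j))-\epsilon_f$; your $\min$ form is what the definition actually gives. Also, your Step 1 needs no extra convention: if $d'\in\mathcal{X}$, the pairs $(d',X)$ enter the minimum and force $\max_{X}\text{sim}(g(d'),g(X))\ge\min_{X_i,X_j}\text{sim}(g(X_i),g(X_j))$, which contradicts $\epsilon_g>0$.
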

\begin{proof}
See in Appendix~\ref{App:ProofTheorem}.
\end{proof}
The above theorem indicates that as $f$ is better as the retrieval embedder, i.e., a larger $\epsilon_{f}$ with high $p_{\epsilon}^{f}$ exists, the transferred attack is more likely to be effective; and as $g$ is better as the embedder, i.e., a nonzero $\epsilon_{f}$ with high $p_{\epsilon}^{f}$ exists, the transferred attack is easier to achieve.

This maximum objective term aligns with the upper bound in Eq.\ref{eq:max}. Therefore we set it as our attack's optimization objective to replace Eq.\ref{eq:argQ}:
\begin{equation}
\label{eq:minmax}
    \arg \min_{d'} \max_{X\in\mathcal{X}} \text{sim}(g(d'),g(X))
\end{equation}
In practice, we cannot get all possible contexts within the topic $\mathcal{X}$ to calculate the maximum. Therefore, recalling the min-max problem solution raised in Generative Adversarial Nets (GAN)~\cite{goodfellow2014generative}, we instead approximate Eq.\ref{eq:minmax} into two recursive equations:
\begin{equation}
\label{eq:recursive}
\begin{aligned}
    X &= \arg\max_{X\in\mathcal{X}}\text{sim}(g(d'),g(X))\\
    d' &=\arg\min_{d'} \text{sim}(g(d'),g(X))
\end{aligned}
\end{equation}
To solve this pair of equations, we could first sample an initial query $X$ that is relevant to $d$ through LLM (optional to be $g$) generation, and adopt the classic adversarial learning to optimize the document and the query with their losses opposite on the similarity. However, we need to note that Eq.\ref{eq:recursive} is not equivalent to Eq.\ref{eq:minmax}, but instead a necessary (or "weaker") condition to Eq.\ref{eq:minmax}. Intuitively, it just describes a first-order stationary point, and only when the $g$ function is convex on $\mathcal{X}$, it equals to the global min-max point of Eq.\ref{eq:minmax}, which is indeed hardly to happen in LLMR cases.  

To better adapt our method in such nonconvex cases, we take reference to Two-Time-Scale GDA~\cite{nonconvex-minmax}, which proves that 
it is helpful to set inner minimization steps larger than the outer maximization steps. This claim suggests us to apply different learning rate $\eta_{d}$ and $\eta_{X}$ as step sizes to the $d'$ and $X$, specifically with $\eta_{X}>\eta_{d}$. However in the GCG algorithm which is adopted to optimize $d$ and $X$, there exists two problems: (1) Each step of GCG learning is a heuristic search and no actual learning rate could be applied, which we have stated in Section~\ref{sec:formatting}; (2) The optimizable length of $X$ should be pre-set to achieve efficient GCG learning, however, would naturally limit the learning effect of $X$ and make its embedding hard to deviate much in the designed learning steps. Therefore, we choose to turn this step-scale difference into a population difference: instead of learning one sample $X$, we make $N$ diverse samplings:
\begin{equation}
    \mathcal{S} = \{S_{i}=h(d\oplus p_{i}), i\in [1,N]\}
\end{equation} where $p_{[N]}$ are crafted prompts to diversely generate related queries (described in Appendix), and $h$ is a third generative LLM. By optimizing each sampled query in $\mathcal{S}$ in fix steps and optimizing $X$ with the maximum of query's loss, our Document-Queries Adversarial (DQ-A) learning turns Eq.\ref{eq:recursive} into the adversarial losses:
\begin{equation}\label{eq:lossD}
\begin{aligned}
    \mathcal{L}_{d}&= \max_{S\in\mathcal{S}} \text{sim}(g(d'),g(S))\\
    \mathcal{L}_{\text{S}} &= -\sum_{S\in\mathcal{S}} \text{sim}(g(d'),g(S))
\end{aligned}
\end{equation}
In GCG learning, we optimize every token in the sampled queries for the loss $\mathcal{L}_{S}$, considering they don't need to be in practice use. While for the victim document, to simulate a normal user's edition, we follow the original GCG injection process that only optimize fixed injected tokens on $\mathcal{L}_{d}$ throughout the adversarial learning. Intuitively, this difference in tokens amount may also help for simulating $\eta_{X}>\eta_{d}$. Afterwards, the learnt adversarial tokens of $d'$ are the final injection attack tokens we select.

\subsection{Surrogate Hidden with Word-Embedding }
\label{sec:word-embed}
During the experiments we also empirically find an interesting phenomenon: if we use word-embedding (the first hidden by word-embedding layer) of the input document and the sampled queries to  perform the adversarial learning as surrogate of the last hidden, it will offer an even more effective and transferable attack on the target black-box model $g$'s performance by retrieving with $g's$ last hidden. 
We analyze this may due to the abundant parameters of the LLM, and when projected back to the input context, the distance optimized in the latter layer is less influential than the distance optimized in the shallow layer, which decrease the former attack's effectiveness and transferability. 

Therefore, we use the word-embedding of surrogate model $f$ instead of its last layer hidden to execute the steps in Section~\ref{sec:metho}:
\begin{equation}
\begin{aligned}
\tilde{g}(X)&=\mathbf{X}_{0}^{g}\in\mathbb{R}^{L\times m_{0}^{g}}\\
\tilde{\text{sim}}(\tilde{g}(X),\tilde{g}(X'))& = \frac{\sum \text{sim}(\mathbf{X}^{g}_{0,i},\mathbf{X}'^{g}_{0,j} )}{L L'}
\end{aligned}
\end{equation}
$\tilde{g}$ and $\tilde{\text{sim}}$ stand for the word-embedding-surrogate version of embedding function and similarity matching metric.  In the ablation study we will also compare the performance of DQ-A on word-embedding and on last-hidden.

%% file: section_camera/4_experiments.tex
\section{Experiments}
\label{sec:Experiments}
\subsection{Experiment Setting}
In our experiments, we use four datasets: "economics", "psychology", "biology" and "robotics", from a popular and challenging benchmark BRIGHT~\cite{hongjinbright} to validate our attack's influence on retrievers based on popular LLM models, including: 
\vspace{-3mm}
\begin{enumerate}
    \item Classic opensource base models suggested by BRIGHT: Qwen-1.5-7B-Instruct, SFR-Embedding-Mistral and E5-Mistral-7B-Instruct;
    \item High-download embedding LLM on huggingface: Embedding-Gemma-300M, Jina-Embeddings-v3, Granite-Embedding-r2 and Qwen3-Embedding-0.6B. 
\vspace{-3mm}
\end{enumerate}
In ablation study, we test on three best-performance models:  jinaai, gemma and Qwen1.5.
We use the Qwen-1.5-4B-Casual both as our base surrogate model for learning attacks and casual model for generation. We adopt the word-embedding surrogate (Sec.~\ref{sec:word-embed}) as default, and later analysis its impact in ablation study. For the baselines, we choose three typical attack methods: (1) Greedy Coordinate Gradient (GCG)~\citep{kumar2024manipulating} on the same surrogate model to transfer attack, (2) PRADA~\citep{wu2023prada} on the same surrogate model, and (3) Poisoned-RAG~\citep{zou2025poisonedrag}, who is prompted (described in Appendix) to generate tokens that diminish the document. Due to designed nature, we provide the ground-truth query for the three baselines. Additionally, we provide the victim model's query access and 10\% of the queries/documents dataset for PRADA to do alignment training, which is achieved in LoRa finetuning process. Note while additional information creates a more beneficial setting for these baselines, the superiority of our method will be validated by following experiments even in face of such imbalanced comparison. We adopt 4 metrics for evaluation: NDCG@25, NDCG@50, Recall@25 and Recall@50. Recall measures if the ground-truth document is retrieved into the top 25/50, while NDCG suggests a rank-relevant metric. We apply both baselines our attacks on all ground-truth documents and evaluated how queries' retrieval performance drops compared with  original results (lower is better $\downarrow$). We set the injected token amounts to 10 in our general experiment. For results not lower than 0.5\%, we mark them in gray, indicating insufficient attacks.\footnote{\small Our implementation code is available at \url{https://github.com/JetRichardLee/DQA-Learning}.}

\subsection{Experiments Results}
\paragraph{General Performance} 
The performance of LLM retrievers on original documents and attacked documents in the default setting is shown in Table~\ref{tab:Performance}. In general, our attack achieves the highest performance decrease compared with other attack methods. In 6 of 7 retrievers, our black-box attack can lead to an obvious decrease in all four averaged metrics, especially around a 2\%-10\% drop in Recall@25 and a 1\%-7\% drop in Recall@50. In dataset Robotics, our attack can achieve even an 8\% performance drop for Qwen1.5 and 6\% for Jinaai in Recall@50, which reduces these best performing retrievers' around 30\% and 20\% drop in fraction of their original performance. Only on Qwen3-Emb-0.6B, both our attack and other baselines find it hard to achieve a significant attack effect. We analyze that Qwen3-Emb is potentially post-trained against document injection attack for robustness, while this may also contribute to its lower performance compared with other small-scale LLMs like Jinaai and Emb-Gemma. This observed trade-off between robustness and precision further conforms with our Lemma~\ref{lem:Trans} in Section~\ref{sec:metho}. In conclusion, considering our strict limitation in injected tokens (suffix only) and no post-picking or crafting in attack samples, these decreasing effects constitute a strong evidence that LLMR could be vulnerable to adversarial attack even in complete black-box scenarios. We also notice that while Posioned-RAG behaves mostly bad, its oppositely helps to boosts the victim document even prompted to diminish it. One potential reason is that LLM generation task is mostly based on relevance, which may generate related tokens even required not to do so.  
\begin{table*}[t]\renewcommand{\arraystretch}{1.2}
    \centering
\addtolength{\tabcolsep}{-4.3pt}
\scriptsize	
    \begin{tabular}{c|c|cc|cc|cc|cc|cc|cc|cc|cc||cc|cc}
		\toprule  
         \multirow{3}{*}{Victim Models}&\multirow{3}{*}{Attacks}& \multicolumn{4}{c|}{Economic}&\multicolumn{4}{c|}{Psychology}&\multicolumn{4}{c|}{Biology} &\multicolumn{4}{c||}{Robotics}&\multicolumn{4}{c}{Average}\\
         
         \cline{3-22}
         &&\multicolumn{2}{c|}{NDCG}&\multicolumn{2}{c|}{Recall}&\multicolumn{2}{c|}{NDCG}&\multicolumn{2}{c|}{Recall}&\multicolumn{2}{c|}{NDCG}&\multicolumn{2}{c|}{Recall}&\multicolumn{2}{c|}{NDCG}&\multicolumn{2}{c||}{Recall}&\multicolumn{2}{c|}{NDCG}&\multicolumn{2}{c}{Recall}\\
         \cline{3-22}
         &&{\tiny @25}&{\tiny @50}&{\tiny @25}&{\tiny @50}&{\tiny @25}&{\tiny @50}&{\tiny @25}&{\tiny @50}&{\tiny @25}&{\tiny @50}&{\tiny @25}&{\tiny @50}&{\tiny @25}&{\tiny @50}&{\tiny @25}&{\tiny @50}&{\tiny @25}&{\tiny @50}&{\tiny @25}&{\tiny @50}\\
         \cline{1-22}
         \multirow{5}{*}{QWen1.5-7B-Ins}&Original&  0.212&0.237&0.314&0.409&0.272 & 0.292& 0.405 &0.493 &  0.336& 0.359&   0.469&0.553 &0.158  &0.177 &  0.245 &0.311&0.245&0.266&0.358&0.442\\
         \cline{2-22}
         &GCG*&0.184  &0.209 & 0.291 &0.393 & 0.259 &0.281 &{0.398}  &\textcolor{gray}{0.493} &{0.330}  &\textcolor{gray}{0.355} &\textcolor{gray}{0.468}   &\textcolor{gray}{0.552}&  0.138&0.152 &0.212   &0.259&0.228&0.249&0.342&0.424\\
         &PRADA*& 0.189 &0.215  &0.261 & 0.368 & {0.263} &\textcolor{gray}{0.291} & 0.390 &\textcolor{gray}{0.501} & \textcolor{gray}{0.331}& \textcolor{gray}{0.357}& \textcolor{gray}{0.469}&\textcolor{gray}{0.564}& 0.134 &0.155 & 0.194 &0.268&0.229&0.255&0.329&0.425\\
         &Poi.-RAG*  & \textcolor{gray}{0.233} & \textcolor{gray}{0.260}&\textcolor{gray}{0.340}  &\textcolor{gray}{0.454} & \textcolor{gray}{0.308} & \textcolor{gray}{0.327}&\textcolor{gray}{0.446}  &\textcolor{gray}{0.527}  & \textcolor{gray}{0.357}& \textcolor{gray}{0.385}& \textcolor{gray}{0.493}&\textcolor{gray}{0.596} & \textcolor{gray}{0.229} &\textcolor{gray}{0.251} & \textcolor{gray}{0.309}  &
         \textcolor{gray}{0.399}&\textcolor{gray}{0.282}&\textcolor{gray}{0.306}&\textcolor{gray}{0.397}&\textcolor{gray}{0.494}\\
         &DQ-A(Ours)&\textbf{0.132} &\textbf{0.153} &\textbf{0.223} &\textbf{0.309} & \textbf{0.222} &\textbf{0.245} &\textbf{0.347} &\textbf{0.440} & \textbf{0.281} &\textbf{0.311} & \textbf{0.417} &\textbf{0.519} &\textbf{0.108} &\textbf{0.126} & \textbf{0.152} &\textbf{0.231}&\textbf{0.186}&\textbf{0.209}&\textbf{0.285}&\textbf{0.375}
\\
         \cline{1-22}
         \multirow{5}{*}{SF-Mis.-7B}&Original& 0.203& 0.224&  0.311& 0.395& 0.235 & 0.261&  0.418&0.519 &  0.238& 0.270&   0.348&0.476 &0.187  & 0.198& 0.289  &0.321&0.216&0.238&0.342&0.428\\
         \cline{2-22}
         &GCG*&\textcolor{gray}{0.199}  &\textcolor{gray}{0.220} &{0.303}  &\textcolor{gray}{0.396} & \textcolor{gray}{0.239} &\textcolor{gray}{0.266} & \textcolor{gray}{0.441} &\textcolor{gray}{0.542} &  0.227& 0.258&   0.338&0.453&  0.159& 0.177&  \textbf{0.245} &{0.312}&{0.206}&{0.230}&0.332&\textcolor{gray}{0.426}\\
         &PRADA* &\textcolor{gray}{0.199}  &\textcolor{gray}{0.219} & \textcolor{gray}{0.315} & \textcolor{gray}{0.403}& \textcolor{gray}{0.235}& \textcolor{gray}{0.260}& \textcolor{gray}{0.431} &\textcolor{gray}{0.526} & 0.217 &0.256 &  0.317 &0.459 &0.161  &0.176 & 0.249  &\textbf{0.309}&0.203&0.228&0.328&\textcolor{gray}{0.424}\\
         &Poi.-RAG*  &\textcolor{gray}{0.242}  &\textcolor{gray}{0.263} & \textcolor{gray}{0.346} & \textcolor{gray}{0.434}&\textcolor{gray}{0.269}  &\textcolor{gray}{0.289} & \textcolor{gray}{0.455} & \textcolor{gray}{0.529}&\textcolor{gray}{0.249}  &\textcolor{gray}{0.280} & \textcolor{gray}{0.365} & \textcolor{gray}{0.467} &\textcolor{gray}{0.272}  &\textcolor{gray}{0.290} & \textcolor{gray}{0.360}  &\textcolor{gray}{0.418}&\textcolor{gray}{0.258}&\textcolor{gray}{0.281}&\textcolor{gray}{0.382}&\textcolor{gray}{0.462}\\
         &DQ-A(Ours)& \textbf{0.188}&  \textbf{0.210}&   \textbf{0.297}&  \textcolor{gray}{0.392}&  \textbf{0.213} & \textbf{0.243} &   \textbf{0.384}& \textbf{0.495} & \textbf{0.205}  & \textbf{0.234} &    \textbf{0.310}& \textbf{0.417} &  \textbf{0.149} &  \textbf{0.171}& 0.262   &{0.314}&\textbf{0.189}&\textbf{0.215}&\textbf{0.313}&\textbf{0.405}\\
         \cline{1-22}
         \multirow{5}{*}{e5-Mis.-7B}&Original&  0.182 & 0.198&  0.294& 0.365 &0.199 &0.229  &0.346 &0.457& 0.225 & 0.259&  0.326&0.451 & 0.183 &0.197 &0.269   &0.311&0.197&0.221&0.309&0.396\\
         \cline{2-22}
         &GCG*&   {0.158}& 0.183&  {0.252}& {0.355}&{0.190} &\textcolor{gray}{0.224} &0.324  &\textcolor{gray}{0.462} & 0.212 &0.242 &  0.311 &0.417&  0.137&0.159 &  \textbf{0.216}  &0.301&0.174&0.202&0.276&0.384\\
         &PRADA* &0.166  &0.186 &0.266  &0.348 & 0.184 & 0.213&0.323  &0.431 & 0.203&0.231 &0.300   &0.401&0.142  &0.166 &0.217   &0.302&0.174&0.199&0.277&0.371\\
         &Poi.-RAG*  &\textcolor{gray}{0.197}  &\textcolor{gray}{0.220} &\textcolor{gray}{0.305} &\textcolor{gray}{0.404} &\textcolor{gray}{0.225}   &\textcolor{gray}{0.246}  &\textcolor{gray}{0.377}
         &\textcolor{gray}{0.465}  &\textcolor{gray}{0.229}  &\textcolor{gray}{0.266} &\textcolor{gray}{0.324} &\textcolor{gray}{0.461}&\textcolor{gray}{0.272} & \textcolor{gray}{0.288}&\textcolor{gray}{0.366}   &\textcolor{gray}{0.419}&\textcolor{gray}{0.231}&\textcolor{gray}{0.255}&\textcolor{gray}{0.343}&\textcolor{gray}{0.437}\\
         &DQ-A(Ours)& \textbf{0.145} & \textbf{0.173} &\textbf{0.237}& \textbf{0.349}  &  \textbf{0.161} & \textbf{0.194} &   \textbf{0.280}& \textbf{0.410} & \textbf{0.188}  & \textbf{0.211} &  \textbf{0.287}  & \textbf{0.363}&  \textbf{0.125} & \textbf{0.145} & 0.224   & \textbf{0.295}&\textbf{0.155}&\textbf{0.181}&\textbf{0.257}&\textbf{0.354}\\
         \cline{1-22}

         \multirow{5}{*}{Emb.-Gemma}&Original&  0.225 & 0.245&  0.307& 0.395 &0.252 &0.285  &0.395 &0.523& 0.223 &0.254&  0.313&0.420 &  0.178& 0.197&0.262   &0.336 &0.220&0.245&0.319&0.419\\
         \cline{2-22}
         &GCG*& 0.203 &{0.222} &0.293  &0.375 & {0.234} & 0.267&\textbf{0.360}  &0.488 & 0.213 &0.244 & {0.304}   &0.414&  \textbf{0.156} & \textbf{0.180} &   {0.240} & {0.330}&0.202&0.228&\textbf{0.299}&\textbf{0.402}\\
         & PRADA* & 0.207 &0.229 &0.285  &0.381 & 0.243 &0.265 & 0.375 &0.487 & 0.213 & \textbf{0.242}&0.307  &0.416 &0.157 &0.181 &\textbf{0.232}   &\textbf{0.324}&0.205&0.229&0.300&\textbf{0.402}\\
         &Poi.-RAG*  &\textcolor{gray}{0.240}  &\textcolor{gray}{0.272} & \textcolor{gray}{0.308} &\textcolor{gray}{0.441} & \textcolor{gray}{0.278} & \textcolor{gray}{0.312} &  \textcolor{gray}{0.456} & \textcolor{gray}{0.561} & \textcolor{gray}{0.231} &\textcolor{gray}{0.262} &\textcolor{gray}{0.330}   &\textcolor{gray}{0.440} &  \textcolor{gray}{0.217} & \textcolor{gray}{0.237} & \textcolor{gray}{0.322}   & \textcolor{gray}{0.393}&\textcolor{gray}{0.242}&\textcolor{gray}{0.271}&\textcolor{gray}{0.354}&\textcolor{gray}{0.459}\\
         &DQ-A(Ours)&  \textbf{0.191} & \textbf{0.214} & \textbf{0.278}& \textbf{0.374}  &  \textbf{0.229} & \textbf{0.259} &  {0.363}& \textbf{0.486} & \textbf{0.209}  & \textbf{0.238} & \textbf{0.300}  & \textbf{0.402}&  0.160& 0.183&  0.251 &0.334&\textbf{0.198}&\textbf{0.224}&\textbf{0.298}&\textbf{0.399}\\
         \cline{1-22}

         \multirow{5}{*}{jina-Emb.-v3}&Original&0.225 &0.248 &0.360 &0.453 &0.274&0.295&0.422 &0.489 &0.222 & 0.249& 0.325 & 0.422&0.176  &0.200 &0.275  &0.371&0.224&0.248&0.346&0.434\\
         \cline{2-22}
         &GCG* & \textcolor{gray}{0.223} &\textcolor{gray}{0.246} &\textcolor{gray}{0.370}  &\textcolor{gray}{0.459} &\textcolor{gray}{0.274}  &\textcolor{gray}{0.295} &\textcolor{gray}{0.422}  &\textcolor{gray}{0.482}&0.213  &0.237 &0.317   &0.399&  0.162&0.184 &   0.261&0.346&0.218&0.241&\textcolor{gray}{0.343}&0.422\\
         &PRADA*&\textcolor{gray}{ 0.225} &\textcolor{gray}{ 0.245} & \textcolor{gray}{ 0.373} & \textcolor{gray}{ 0.450}&\textcolor{gray}{0.273}  &\textcolor{gray}{0.299} &0.413  &\textcolor{gray}{0.497} &\textbf{0.211}  &\textbf{0.236} &0.316 & 0.404 &  0.169& 0.184&0.268   &0.329&\textcolor{gray}{0.220}&0.241&\textcolor{gray}{0.343}&0.420\\
         &Poi-RAG*  &\textcolor{gray}{0.241}  &\textcolor{gray}{0.264}& \textcolor{gray}{0.383}&\textcolor{gray}{0.473}    &\textcolor{gray}{0.294} &\textcolor{gray}{0.320}&\textbf{0.406} &\textcolor{gray}{0.503} &{0.214}  &{0.245} & 0.317  &\textcolor{gray}{0.427} &\textcolor{gray}{0.235}  &\textcolor{gray}{0.261} 
         &\textcolor{gray}{0.328}   &\textcolor{gray}{0.425}&\textcolor{gray}{0.246}&\textcolor{gray}{0.273}&\textcolor{gray}{0.356}&\textcolor{gray}{0.457}\\
         &DQ-A(Ours)&  \textcolor{gray}{0.221} &\textbf{0.243} &\textbf{0.351}&\textbf{0.437}  & \textbf{0.264} & \textbf{0.288} &   {0.410}& \textcolor{gray}{0.487} &{0.214}  &{0.239} & \textbf{0.303}  &\textbf{0.391}& \textbf{0.152}  & \textbf{0.168} &\textbf{0.253}   &\textbf{0.318}&\textbf{0.213}&\textbf{0.235}&\textbf{0.329}&\textbf{0.408}\\
                  \cline{1-22}
          \multirow{5}{*}{Granite-Emb.-R2}&Original&0.187 &0.210 &0.241 &0.342 &0.182&0.203&0.312 &0.396 &0.158 & 0.181& 0.242 & 0.325&0.128  &0.146 &0.200  &0.272&0.164&0.185&0.249&0.334\\
         \cline{2-22}
         &GCG*&0.174 & 0.198 &  {0.233} & \textcolor{gray}{0.335} & {0.182} & \textcolor{gray}{0.203}& \textcolor{gray}{0.325}& \textcolor{gray}{0.398}& \textcolor{gray}{0.189} &\textcolor{gray}{0.220} &\textcolor{gray}{0.298}   &\textcolor{gray}{0.402}& {0.114} & {0.132}& \textcolor{gray}{0.203}   &\textcolor{gray}{0.270}&\textcolor{gray}{0.165}&\textcolor{gray}{0.188}&\textcolor{gray}{0.265}&\textcolor{gray}{0.351} \\
         &PRADA* & 0.182& 0.200& \textcolor{gray}{0.255} &\textcolor{gray}{0.338} &  \textcolor{gray}{0.180}&\textcolor{gray}{0.201} &\textcolor{gray}{0.318}  &\textcolor{gray}{0.392} & \textcolor{gray}{0.159} &\textcolor{gray}{0.183} &\textcolor{gray}{0.239}   &\textcolor{gray}{0.326} &\textbf{0.108}  & {0.132}& \textbf{0.175}  &0.257&0.157&0.179&\textcolor{gray}{0.247}&0.328\\
         &Poi-RAG*  & \textcolor{gray}{0.196} &\textcolor{gray}{0.221} &\textcolor{gray}{0.256}  &\textcolor{gray}{0.364} &  \textcolor{gray}{0.188} & \textcolor{gray}{0.214} & \textcolor{gray}{0.316}  & \textcolor{gray}{0.409} & \textcolor{gray}{0.159} & \textcolor{gray}{0.186}& {0.236}   &\textcolor{gray}{0.330}  &\textcolor{gray}{0.162}&\textcolor{gray}{0.184}&\textcolor{gray}{0.234}&\textcolor{gray}{0.321}&\textcolor{gray}{0.176}&\textcolor{gray}{0.201}&\textcolor{gray}{0.261}&\textcolor{gray}{0.356}\\
         &DQ-A(Ours)&  \textbf{0.157} & \textbf{0.179} &\ \textbf{0.231}& \textbf{0.326}  &  \textbf{0.177} & \textbf{0.197}& \textcolor{gray}{0.312} &   \textbf{0.386} &\textbf{0.152}  &\textcolor{gray}{0.179} & \textbf{0.229}  &\textcolor{gray}{0.326}& {0.111}  & \textbf{0.130} &  {0.183}  &\textbf{0.254}&\textbf{0.149}&\textbf{0.171}&\textbf{0.239}&\textbf{0.323}\\

                           \cline{1-22}
          \multirow{5}{*}{Qwen3-Emb.-0.6B}&Original&0.198 &0.211 &0.293 &0.344 &0.197&0.218&0.343 &0.419 &0.159 & 0.177& 0.242 & 0.305&0.133  &0.149 &0.203  &0.264&0.172&0.189&0.270&0.333\\
         \cline{2-22}
         &GCG*& \textcolor{gray}{0.194} &\textcolor{gray}{0.208} &\textbf{0.288}  &\textcolor{gray}{0.342} &\textcolor{gray}{0.203}  &\textcolor{gray}{0.223} &\textcolor{gray}{0.341}  &\textbf{0.412} & 0.152&0.168&\textbf{0.215}&0.296
         &\textcolor{gray}{0.135} &\textcolor{gray}{0.150} &\textcolor{gray}{0.204}   &0.259&\textcolor{gray}{0.171}&\textcolor{gray}{0.187}&\textbf{0.262}&\textbf{0.327}\\
         &PRADA* &\textbf{0.192}  & \textbf{0.205}& \textbf{0.288} &\textcolor{gray}{0.350} &\textcolor{gray}{0.203}  &\textcolor{gray}{0.224} &\textcolor{gray}{0.340} &{0.413} & 0.149 &0.169   &{0.219}& \textbf{0.292}&\textcolor{gray}{0.136} &\textcolor{gray}{0.148} & \textcolor{gray}{0.216}  &{0.258}&\textcolor{gray}{0.170}&\textcolor{gray}{0.187}&\textcolor{gray}{0.266}&{0.328}\\

         &Poi-RAG*  &\textcolor{gray}{0.204}  &\textcolor{gray}{0.222} &\textcolor{gray}{0.294}  &\textcolor{gray}{0.363} & \textcolor{gray}{0.207} &\textcolor{gray}{0.341} &\textcolor{gray}{0.340}  &\textcolor{gray}{0.447} & \textcolor{gray}{0.156} &\textcolor{gray}{0.176} &{0.234}   &\textcolor{gray}{0.305} &\textcolor{gray}{0.175}  &\textcolor{gray}{0.193} & \textcolor{gray}{0.269}  &\textcolor{gray}{0.333}&\textcolor{gray}{0.186}&\textcolor{gray}{0.233}&\textcolor{gray}{0.284}&\textcolor{gray}{0.362}\\
         &DQ-A(Ours)&  \textbf{0.192} & \textcolor{gray}{0.212} & \textbf{0.288}& \textcolor{gray}{0.359}  &  \textcolor{gray}{0.209} & \textcolor{gray}{0.230}& \textcolor{gray}{0.341} &   \textcolor{gray}{0.417} &\textbf{0.146}  &\textbf{0.165} & {0.226}  &\textbf{0.292}& \textbf{0.128}  & \textbf{0.142} &  \textcolor{gray}{0.200}  &\textbf{0.246}&\textcolor{gray}{0.169}&\textcolor{gray}{0.187}&{0.264}&\textcolor{gray}{0.329}\\
         
         \bottomrule
    \end{tabular}
    \caption{Performance drops after attack($\downarrow$). The best results among baselines are marked in \textbf{bold}. To better visualize how challenging the task is we also mark results with decrease less than 0.005 in \textcolor{gray}{gray}. The last four columns are the average results of the former four datasets for general comparison. "*" means this method requires the ground-truth query for the victim document.}
         \vspace{-5mm}
    \label{tab:Performance}
\end{table*}

\begin{figure}
    \captionsetup[subfloat]{labelsep=none,format=plain,labelformat=empty}
    \centering
        
    \subfloat[\footnotesize Jina-Embeddings-v3]{
    \includegraphics[width=0.24\linewidth]{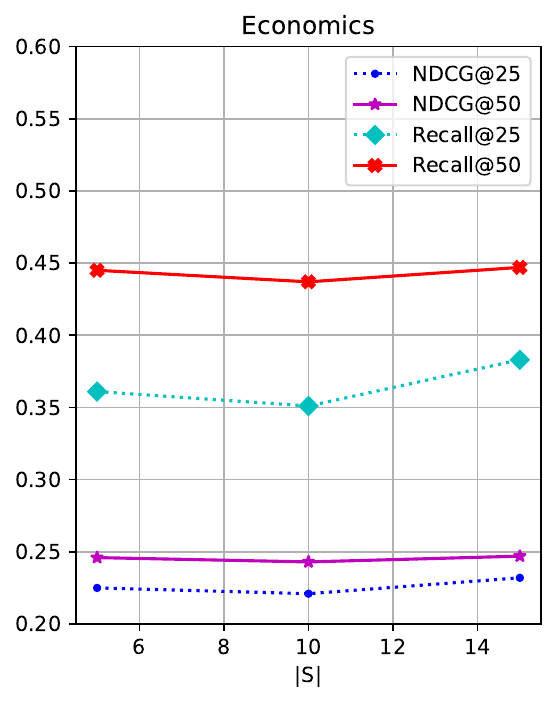}
    \includegraphics[width=0.24\linewidth]{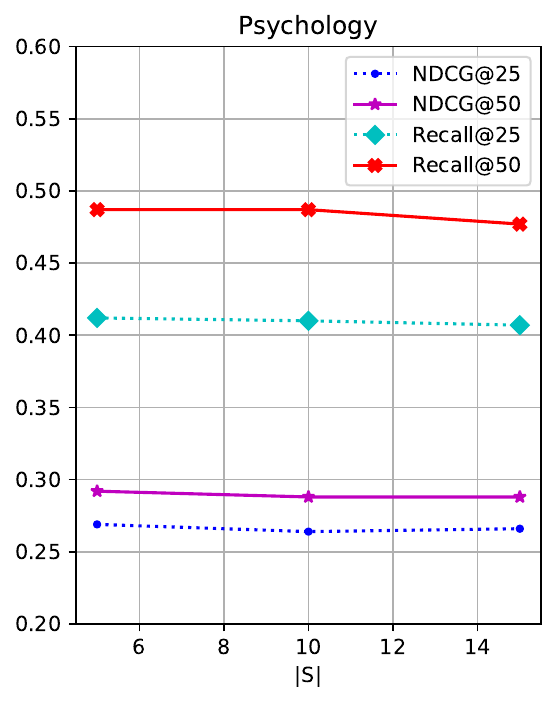}
    \includegraphics[width=0.24\linewidth]{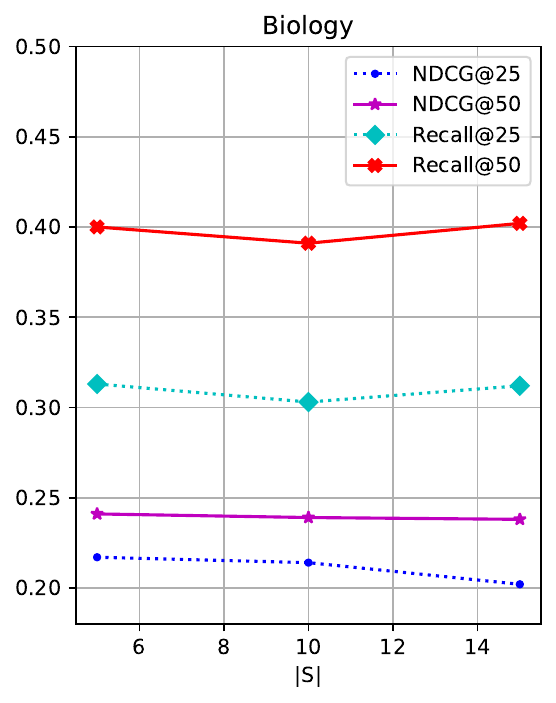}
    \includegraphics[width=0.24\linewidth]{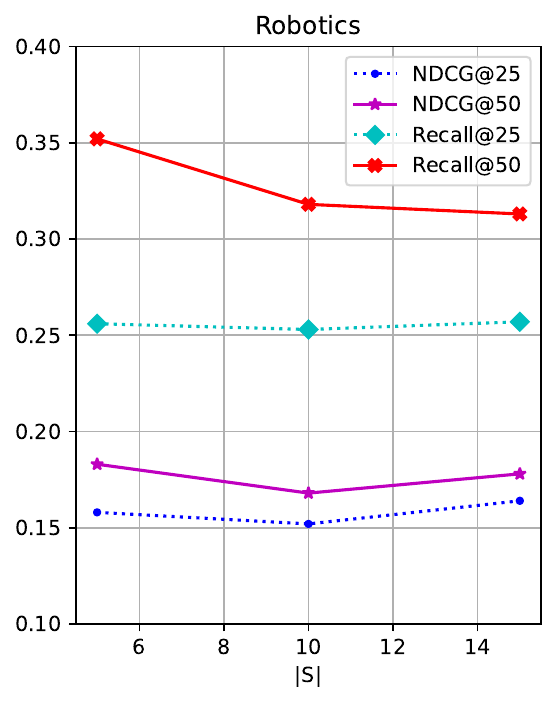}
    \vspace{-2mm}
    }
    \vspace{+1mm}
    \subfloat[\footnotesize Embeddings-Gemma-300M]{
    \includegraphics[width=0.24\linewidth]{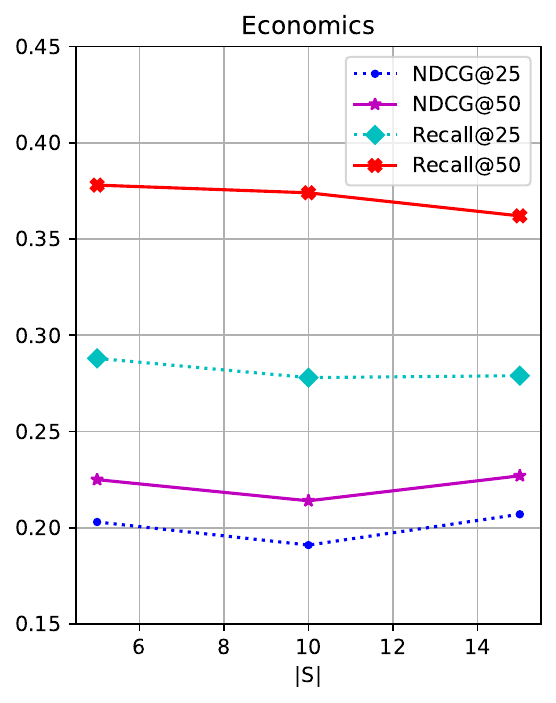}
    \includegraphics[width=0.24\linewidth]{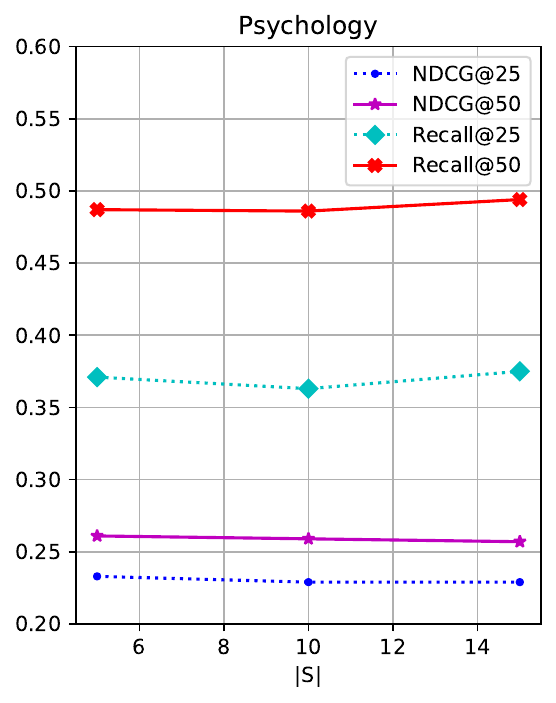}
    \includegraphics[width=0.24\linewidth]{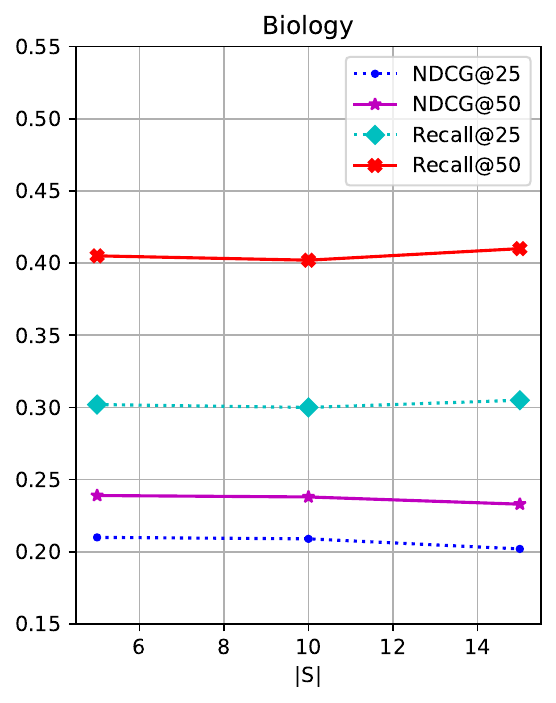}
    \includegraphics[width=0.24\linewidth]{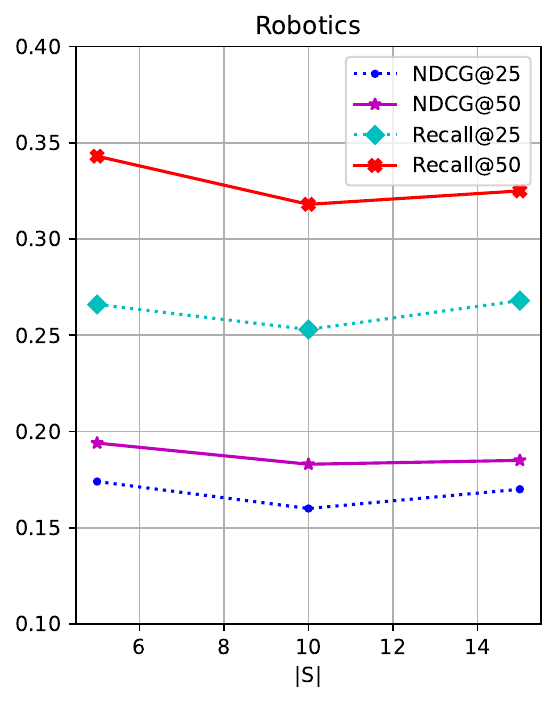}
    \vspace{-2mm}
    }
    \vspace{+1mm}
    \subfloat[\footnotesize Qwen1.5-Instruct-7B]{
    \includegraphics[width=0.24\linewidth]{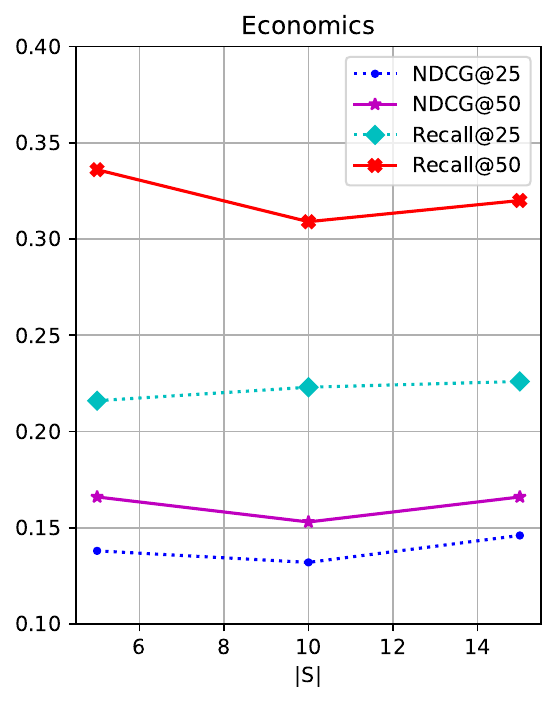}
    \includegraphics[width=0.24\linewidth]{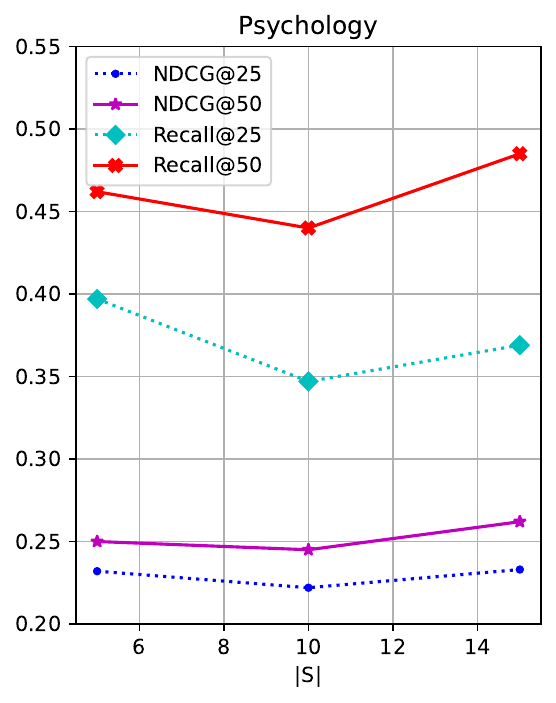}
    \includegraphics[width=0.24\linewidth]{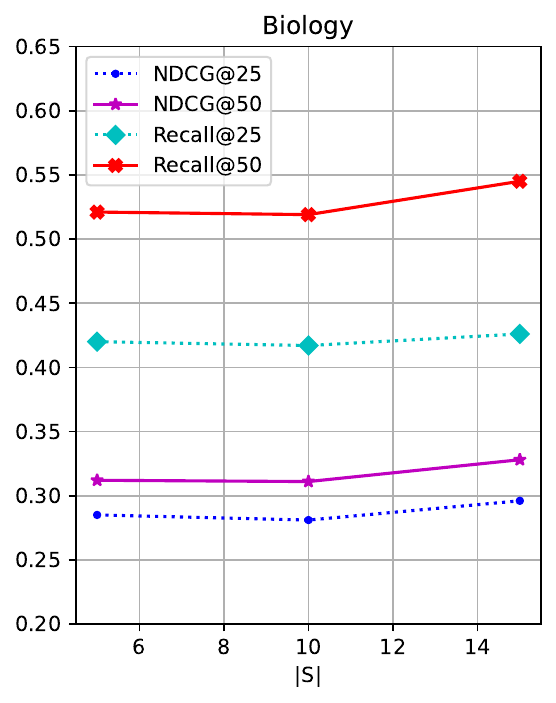}
    \includegraphics[width=0.24\linewidth]{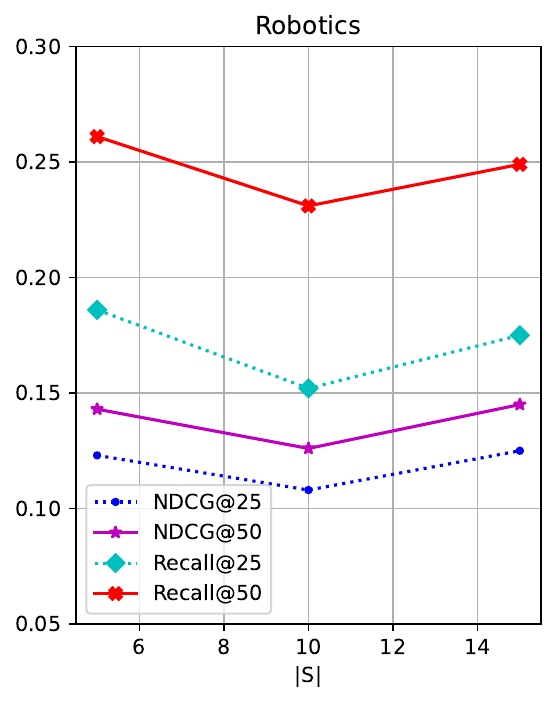}
    \vspace{-2mm}
    }
    \caption{Impact of Different $|\mathcal{S}|$.}
    \label{fig:S_impact}
\vspace{-5mm}
\end{figure}
\paragraph{Ablation Study of DQ-A Learning} To validate and study the effectiveness of our proposed adversarial learning framework and the word-embedding-surrogate strategy, we further compare our method with two ablation variates: DQ-A-H which doesn't apply word-embedding-surrogate strategy, and S-GCG which further doesn't apply adversarial learning but only GCG with sampled initial queries and summed similarities as the document loss. The test results are exhibited in Table~\ref{tab:albation}, where we can find that DQ-A-H's attack generally decreases more performance than S-GCG, which validates the effectiveness of our adversarial learning setting. We also notice in a few columns DQ-A-H behaves a little worse, and analyze this may because the learned queries are not properly restricted to the original topic. Therefore while their embeddings chase to the polluted document and no longer lay in original topic cluster, their roles as topic-rejecters become less effective. Comparing the DQ-A and DQ-A-H, we find both baselines have some better attack results, while DQ-A slightly outperforms. This comparison indicates that while the word-embedding strategy has effects as Section~\ref{sec:word-embed} states, most of the effectiveness still comes from our adversarial learning design.

\begin{table}[]
    \centering
    \scriptsize
    \vspace{-2mm}
\addtolength{\tabcolsep}{-4pt}
    \begin{tabular}{cccccccccc}
		\toprule  \multicolumn{2}{c}{\multirow{2}{*}{Attacks}}& \multicolumn{2}{c}{Economics}&\multicolumn{2}{c}{Psychology}&\multicolumn{2}{c}{Biology}&\multicolumn{2}{c}{Robotics}\\
        &&{ R@25}&{ R@50}&{R@25}&{R@50}&{ R@25}&{R@50}&{ R@25}&{R@50}\\
        \midrule
         \multirow{3}{*}{jina}&S-GCG&358&448& 0.414& 0.484& 0.312&0.422 &0.272&0.328\\
         &DQ-A-H.& 0.360 &0.444&0.412&0.484&0.308&0.426&0.245&0.323\\
         &DQ-A&0.351 &0.437&0.410&0.487&0.303&0.391&0.253&0.318\\
        \midrule
         \multirow{3}{*}{Gemma}&S-GCG&0.297&0.374&0.396&0.496&0.302&0.423&0.268&0.342\\
         &DQ-A-H.& 0.282 &0.360&0.378&0.495&0.308&0.408&0.238&0.344\\
         &DQ-A&0.278&0.374&0.363&0.486&0.300&0.402&0.251&0.334\\
        \midrule
         \multirow{3}{*}{Qwen}&S-GCG&0.299&0.376&0.407&0.520&0.478&0.588&0.181&0.284\\
         &DQ-A(H)& 0.289&0.401&0.382&0.511&0.482&0.537&0.214&0.286\\
         &DQ-A& 0.223&0.309&0.347&0.440&0.417&0.519&0.152&0.231\\
         \bottomrule
         
    \end{tabular}
    \caption{Performance for Ablation Baselines($\downarrow$).}
    \vspace{-7mm}
    \label{tab:albation}
\end{table}

\vspace{-2mm}
\paragraph{Impact of Sampling Amount}
To study the impact of the amount of sample queries in our DQ-A learning and validate the population effect we raised in Section~\ref{sec:metho}, we also test the decrease effect of our method under different $|\mathcal{S}|$. As Figure~\ref{fig:S_impact} illustrates, when $|\mathcal{S}|$ increases from 5 to 10, attacked performances on all LLM retrievers and datasets decrease more in different scales. This observation accords with our motivation on increasing sample population to achieve more optimal solution in non-convex cases. However, when $|\mathcal{S}|$ increases from 10 to 15, some of attacks become less effective. From this phenomenon we infer that there is also a limit in $|\mathcal{S}|$'s population effect. When $|\mathcal{S}|$ is larger than the inherent optimal $\frac{\eta_{X}}{\eta_{d}}$'s require, attacks become less optimal. 
\vspace{-3mm}
\paragraph{Impact of Injected Token Amount}
\begin{figure}
    \captionsetup[subfloat]{labelsep=none,format=plain,labelformat=empty}
    \centering
        
    \subfloat[\footnotesize Jina-Embeddings-v3]{
    \includegraphics[width=0.24\linewidth]{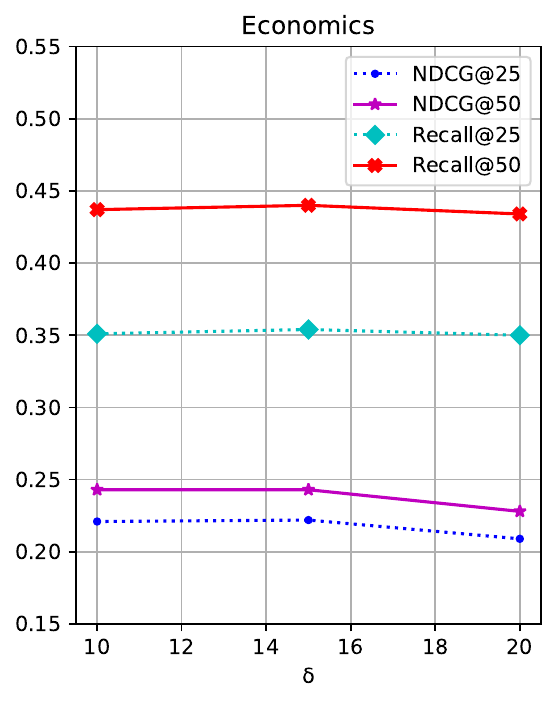}
    \includegraphics[width=0.24\linewidth]{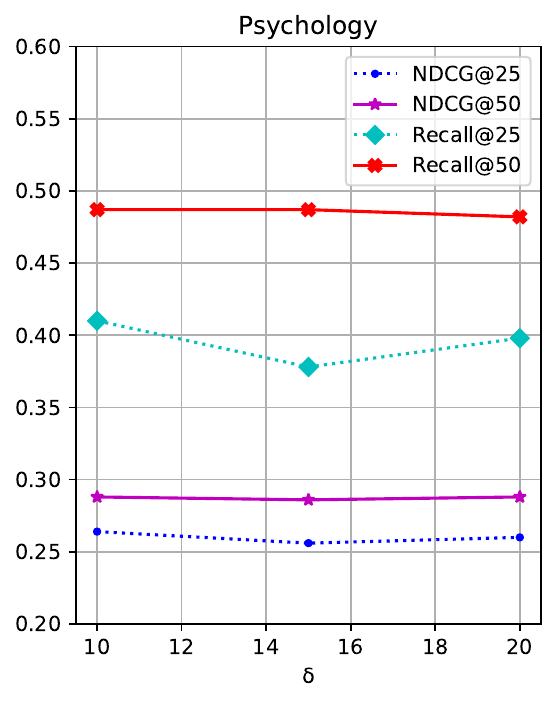}
    \includegraphics[width=0.24\linewidth]{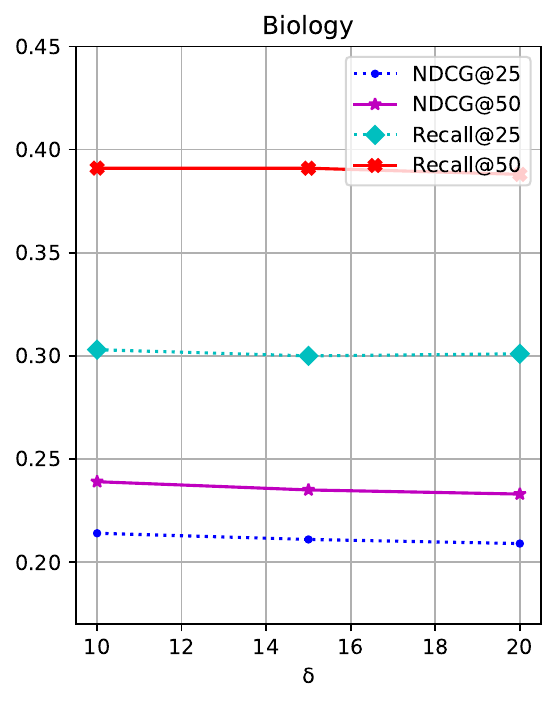}
    \includegraphics[width=0.24\linewidth]{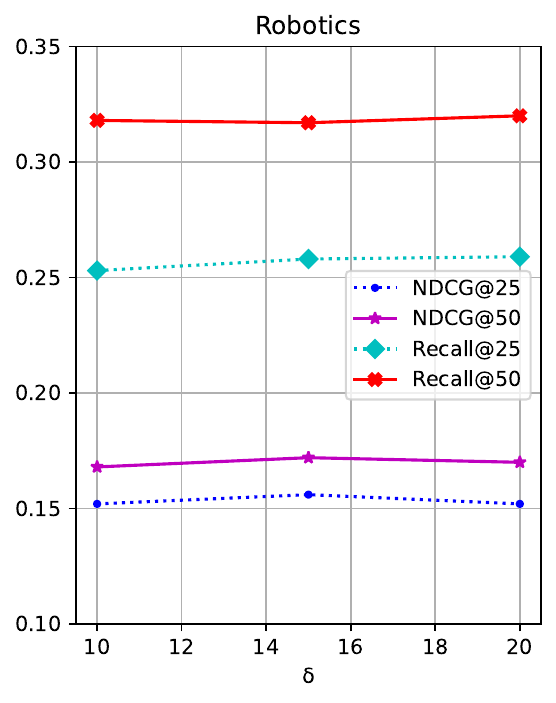}
    }
    \vspace{+1mm}
    \subfloat[\footnotesize Embeddings-Gemma-300M]{
    \includegraphics[width=0.24\linewidth]{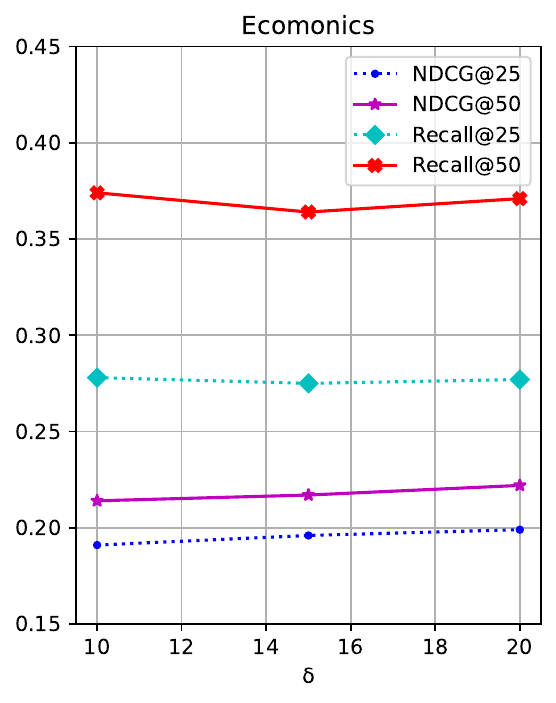}
    \includegraphics[width=0.24\linewidth]{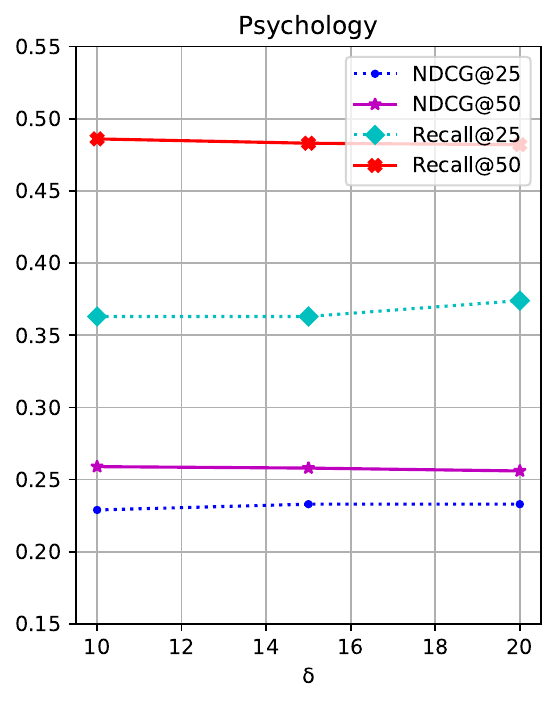}
    \includegraphics[width=0.24\linewidth]{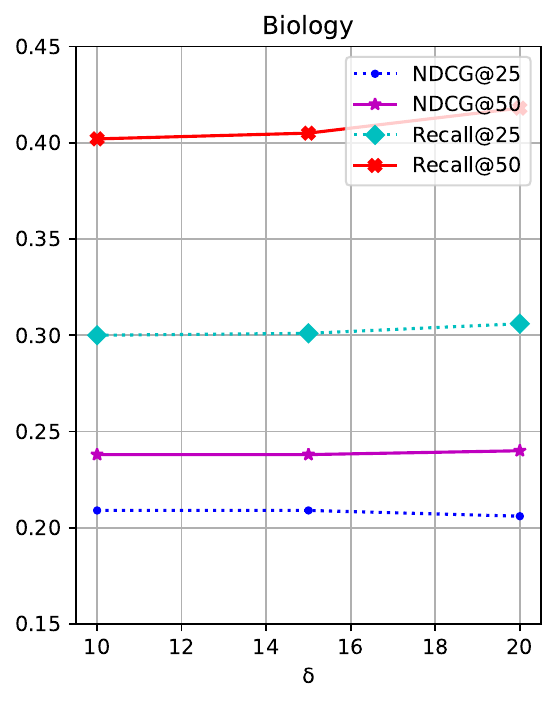}
    \includegraphics[width=0.24\linewidth]{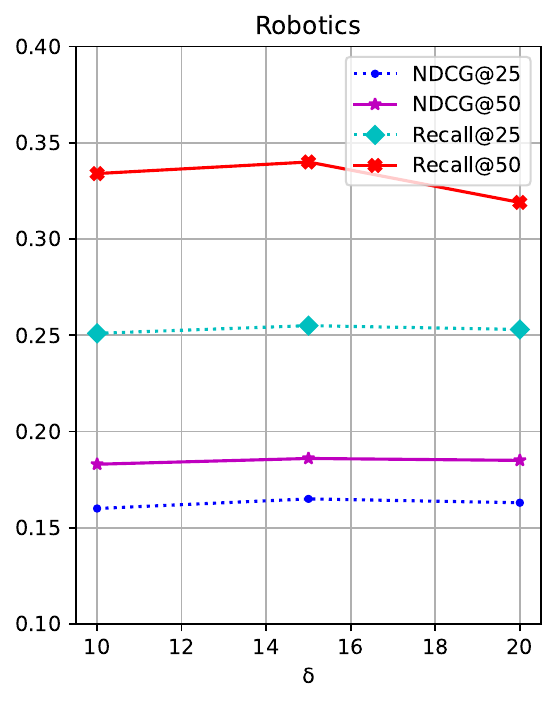}
    }
    \vspace{+1mm}
    \subfloat[\footnotesize Qwen1.5-Instruct-7B]{
    \includegraphics[width=0.24\linewidth]{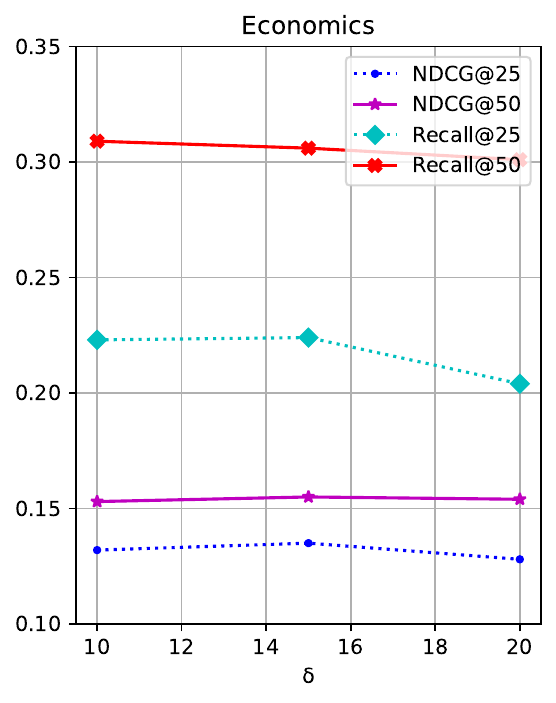}
    \includegraphics[width=0.24\linewidth]{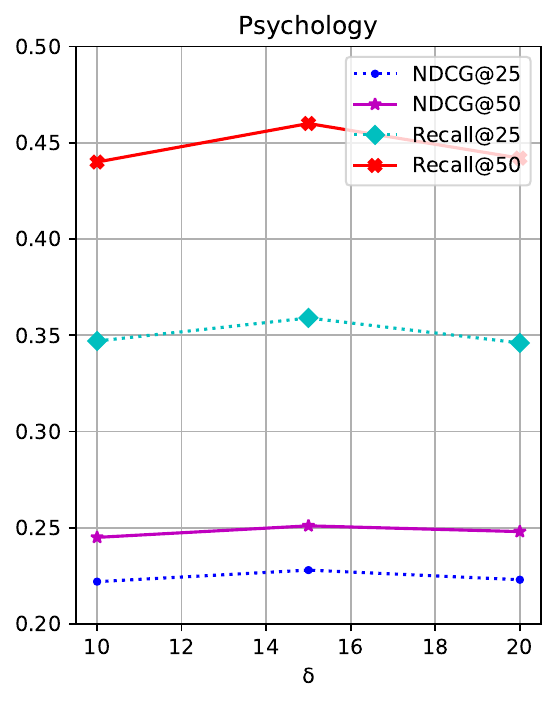}
    \includegraphics[width=0.24\linewidth]{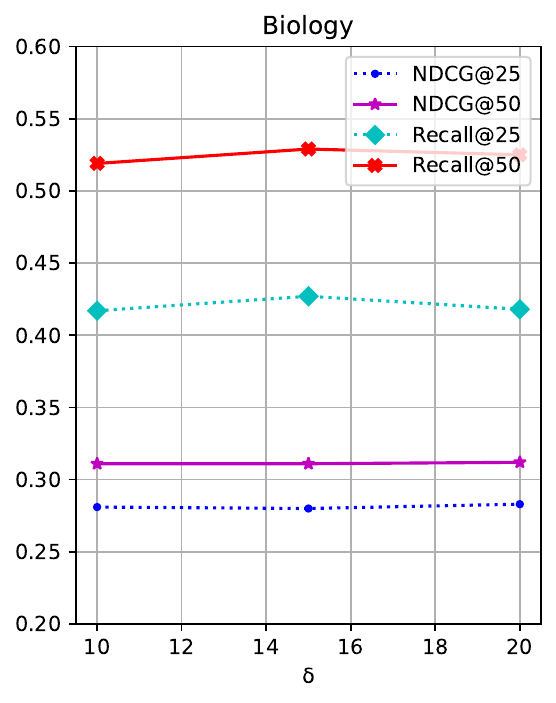}
    \includegraphics[width=0.24\linewidth]{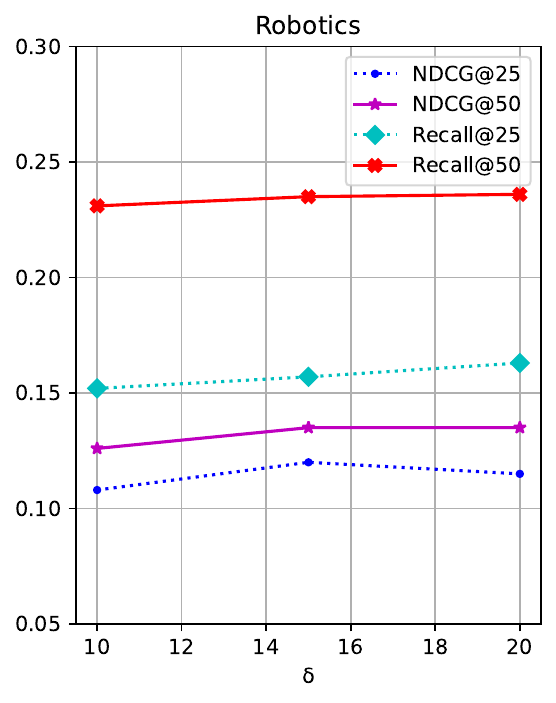}
    }
    \caption{Impact of Injected Token Length ($\downarrow$).}
    \label{fig:delta_impact}
    \vspace{-5mm}
\end{figure}
We also study the impact of the injected token amount (constrained by the $\delta$ in Eq.\ref{eq:tar}) on our attack and explore its effect when the attacker can afford more injection budgets. As Figure~\ref{fig:delta_impact} illustrates, however, it's surprising to find that the attack effects only shift around slightly without significant variations. This result suggests that for the token injection attack on paragraph embedding, a high length of suffix injected tokens length may not have much effect on embedding shifts compared with a fitted length.

\paragraph{Impact of Attack Position} While in most token injection attack settings the injected position is typically chosen as the end of the input context, this setting does not reflect all cases in practice, e.g., the web context may be dynamically updated and extended by other normal users. Therefore, we conduct the ablation study on the chosen injection positions on Gemma. Specifically, we inject the adversarial tokens at the (1)beginning, (2)middle point (at half length), and (3)end of victim documents, and compare their retrieval performance with the original, which is illustrated in Figure~\ref{fig:pos_impact}. We notice when our generated token is injected at the middle point, the attack performance is the worst, and much better at the beginning and at the end. This observation aligns with the common sense that the beginning context and the nearest part (ending) usually occupy higher attention than middle part in transformer-based models.

\begin{figure}[t]
    \centering
    \includegraphics[width=0.98\linewidth]{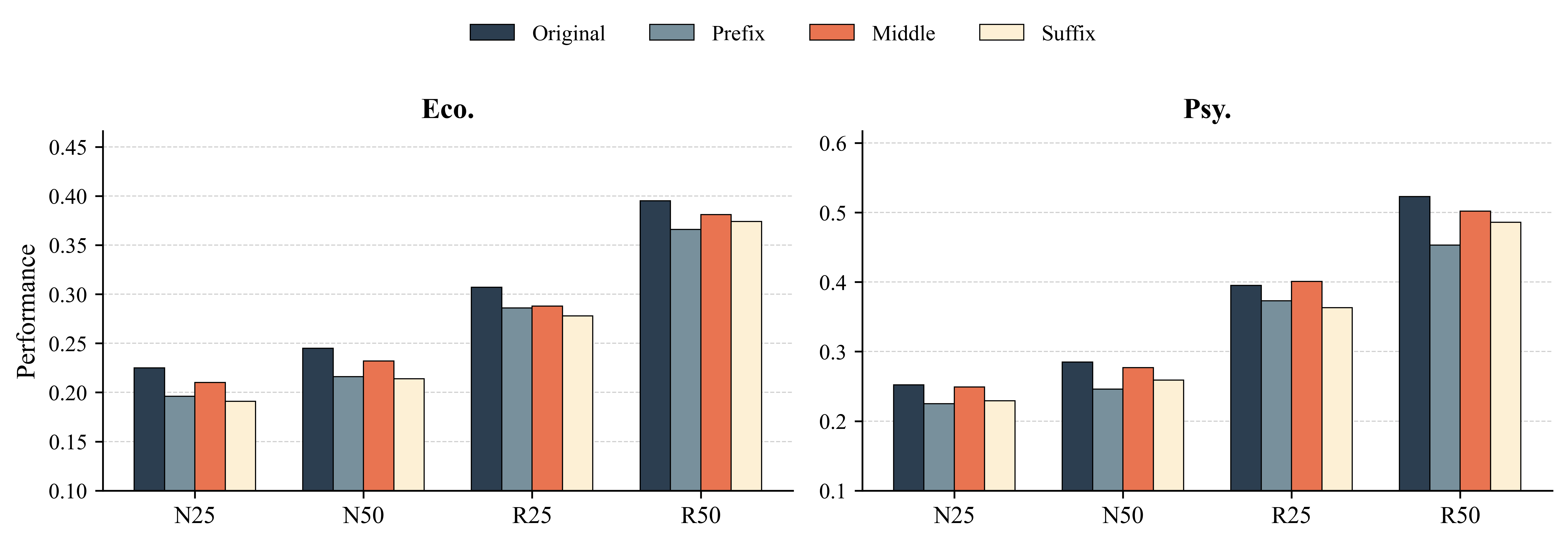}
    \includegraphics[width=0.98\linewidth]{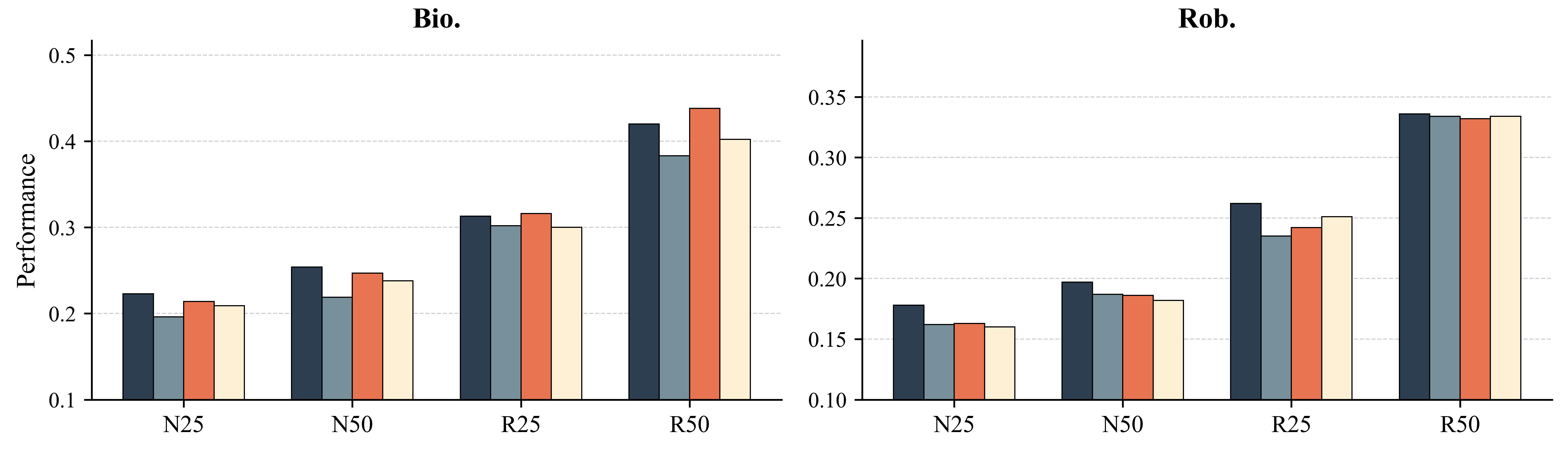}
    \vspace{-1mm}
    \caption{Impact of Injection Position($\downarrow$).}
    \label{fig:pos_impact}
    \vspace{-2mm}
\end{figure}

\vspace{-2mm}
\paragraph{Retriever with Reranker} Despite the sole retrieval task, in some application scenarios the LLM retriever is be used with a reranker model together. Specifically, the retriever first picks a candidate pool from the whole document corpus, and the reranker then scores precisely within the pool and refine the ranking order. While this two-stages architecture achieves higher ranking performance compared with a sole retriever, we clarify it cannot defend against our attack: our attack focuses on the upper retrieval stage, aiming to filter the victim document out of the pool firsthand, and therefore however strong a reranker is applied afterwards, the document is already out of retrieval. To validate this statement, we additionally test two retriever-reranker combinations. The retriever model first selects the top 25/50/100 documents, filtering out others, and then the reranker model scores retrieved documents and reports the NDCG25/50/100. The first set is the BGE-M3 retriever with BGE-reranker-v2-M3, and the second is Gemma-300M with Qwen3-reranker-0.6B. The performance results are illustrated in Table~\ref{tab:comb}. 
\begin{table}[t]\renewcommand{\arraystretch}{1.2}
    \centering
\addtolength{\tabcolsep}{-5.0pt}
\scriptsize	
    \begin{tabular}{c|c|ccc|ccc|ccc|ccc}
		\toprule  
         \multirow{2}{*}{Comb.}&\multirow{2}{*}{Attack}& \multicolumn{3}{c|}{Eco.}&\multicolumn{3}{c|}{Psy.}&\multicolumn{3}{c|}{Bio.} &\multicolumn{3}{c}{Rob.}\\
         
         \cline{3-14}
         &&{\tiny N25}&{\tiny N50}&{\tiny N100}&{\tiny N25}&{\tiny N50}&{\tiny N100}&{\tiny N25}&{\tiny N50}&{\tiny N100}&{\tiny N25}&{\tiny N50}&{\tiny N100}\\
         \cline{1-14}
          {BGE-}&Orig.&{\tiny 0.139}&{\tiny 0.160}&{\tiny 0.165}&{\tiny 0.175}&{\tiny 0.190}&{\tiny 0.209}&{\tiny 0.132}&{\tiny 0.156}&{\tiny 0.172}&{\tiny 0.128}&{\tiny 0.130}&{\tiny 0.143}\\
         \cline{2-14}
         M3s&DQ-A&{\tiny 0.131}&{\tiny 0.151}&{\tiny 0.160}&{\tiny 0.166}&{\tiny 0.179}&{\tiny 0.197}&{\tiny 0.121}&{\tiny 0.145}&{\tiny 0.164}&{\tiny 0.109}&{\tiny 0.122}&{\tiny 0.135}\\
         \cline{1-14}
         {Gem.+}&Orig.&{\tiny 0.212}&{\tiny 0.230}&{\tiny 0.263}&{\tiny 0.236}&{\tiny 0.264}&{\tiny 0.276}&{\tiny 0.249}&{\tiny 0.280}&{\tiny 0.323}&{\tiny 0.175}&{\tiny 0.188}&{\tiny 0.211} \\
         \cline{2-14}
         Qwen3&DQ-A&{\tiny 0.182}&{\tiny 0.200}&{\tiny 0.234}&{\tiny 0.222}&{\tiny 0.245}&{\tiny 0.264}&{\tiny 0.231}&{\tiny 0.253}&{\tiny 0.289}&{\tiny 0.181}&{\tiny 0.183}&{\tiny 0.203}\\
         \bottomrule
    \end{tabular}
    \caption{Performance drops after attack on retriever+reranker combinations($\downarrow$).}
        \vspace{-7mm}
    \label{tab:comb}
\end{table}

\begin{table}
    \centering
    \small
\vspace{-3mm}
    \begin{tabular}{c|c}
    \toprule
         Time Complexity Per Epoch& Average Time Per Document \\
         \hline
         $\mathcal{O}(|\mathcal{S}|(|d|+\delta)^{2}D_{\text{dim}}+\delta V)$&23.065s($\delta =10$, 20 epoches) \\
        \bottomrule
    \end{tabular}
    \caption{Computational Cost and Complexity of DQA.}
\vspace{-3mm}
    \label{tab:cost}
\end{table}
\vspace{-5mm}
\paragraph{Computational Cost and Complexity} In Table~\ref{tab:cost} we provide the time complexity of our attack and its empirical time cost on average of test datasets. $V$ and $D_{\dim}$ refers to the vocabulary size and the dimension size of the surrogate LLM, and
$|d|$ refers to the victim document length in token amount. Since our attack aims to decrease the retrieved possibility of a specific victim document instead of creating batches of corruption documents, this attack computation is generally acceptable.

\vspace{-2mm}
\paragraph{Approximating Verification for Theorem~\ref{lem:Trans}} As we claim the Theorem~\ref{lem:Trans}, it is necessary to verify its establishment in empirical practice. However, since the concept of \emph{topic set} is defined as an infinite context set, it is impossible to verify a precise $p_{\epsilon}$ on the victim document's topic set. Therefore, we conduct an approximating verification by restricting a countable topic set, i.e., the set only contains the victim document and query. Since our definition of $\epsilon-p_{\epsilon}$ always holds for any context set, this approximating verification is rational and sufficient to support our theoretical claim. The full verification process, approximation deduction, and verification result are provided in Appendix~\ref{App:approximate}.

%% file: section_camera/5_related.tex
\section{Related Works}
\paragraph{Adversarial Attack on Retriever Systems} In information retrieval systems, adversarial attack methods generally have two categories: (1) Creating corrupted documents to the corpus and influence the retriever performance on normal documents~\citep{liu2023black,MAWSEO,long2024whispers,zhong2023poisoning}; (2) Manipulating the victim document to decrease or increase its retrieval possibility~\citep{wu2023prada,liu2023topic}.  With the advance of RAG, adversarial attack against such mechanism also arises attention, which mostly focuses on (3) Manipulating or creating a virus document to be highly likely retrieved and poison the generation process~\citep{xiang2024certifiably,liang2025graphrag,zou2025poisonedrag}, leading to LLM's unsafe behaviors. Recently some works on agent memory safety also propose methods on (4) Crafting or Manipulating memory entries with poisoning information to be likely retrieved by the LLM retrievers \citep{chen2024agentpoison,dong2025practical}.
\vspace{-3mm}

\paragraph{Token Injection Attack on Large Language Model} 
Injection attacks against LLM is one of the most popular attack type on LLM robustness study. Technically these attack could be summarized into the three categories on crafting methods: (1) Human-crafted attack~\citep{greshake2023not-pi,shen2024anything-pi,schulhoff2023ignore-pi}, which is mostly created by human, including amateurs from internet and LLM researchers. (2) Token-optimization-based attack~\citep{jiaimproved-opt,hayase2024query-opt,zou2023universal-opt,geisler2024attacking-opt}, which designs specific losses on the LLM downstream task such like generation and retrieval, and trigger specific desired results by heuristically search the optimal combination of injected tokens' word choices. (3) LLM generation attack~\cite{yu2023gptfuzzer-gen,mehrotra2024tree-gen,chao2025jailbreaking-gen}, which requires a Casual LLM to generate adversarial tokens automatically. Such query-based attack is usually easy to achieve, while also found effective in many cases. 
\vspace{-2mm}

%% file: section_camera/6_discussion.tex
\section{Discussion}

\textit{Q: Could this diminishing attack be extended to boost specific document?} The success of our proposed attack is mainly relying on "rejecting documents" from its original relevant groups. Therefore on the boosting task, (1) if a manipulated document is hoped to be retrieved with \textbf{an irrelevant query}, (e.g., recommending toy selling websites to queries on a cartoon, indicating a potential child user), our attack is likely to work with adversarial learning loss to be reversed; (2) however, if a manipulated document is hoped to be retrieved by \textbf{relevant queries with higher rankings}, our attack may be not promising since it aims for "out-of-group" relevance rather than "within" relevance. To further validate the statement (1), we also conduct a small test: we first randomly pick a document from the original economics dataset's corpus, and change all queries’ ground truth to be only the picked document. This setting firstly results in a nearly zero performance, with Recall@50 only 0.00227 and NDCG@50 only 0.00116. Then we reverse the losses of our attack, change the maximum term respect to sampled queries to minimum terms, and apply the reversed attack to the picked document. The Recall@50 is then raised to 0.00601 and NDCG@50 to 0.00251. Although this doubled effect is still limited in lack of further refining, it shows that it is possible to use our attack for enhancing the document retrieval, and we will improve and explore this reversed method further in the future work.

\textit{Q: Is there defense method for our proposed attack?} 
Existing defense methods on retriever systems mainly focus on corpus poisoning attack~\cite{hong2024so-agin,wu2022certified-agin}, i.e., one or several adversarial documents are injected into the corpus, and get boosted into the retrieved list. This setting naturally leads these works to focus on detecting poisoned retrieved documents~\cite{pathmanathan2025ragpart-agin,zhong2023poisoning-agin}, or ensuring robust RAG generation~\cite{xiang2024certifiably}, however, leaving the documents not retrieved in the shadow. This post-retrieval characteristic makes these attacks naturally not applicable against our decreasing attack. Despite these specifically designed retrieval defense, there are also perplexity detecting methods~\cite{alon2023detecting,hu2023token,li2026defensespromptattackslearn} against LLM token injections, however, also validated to be insufficient for our attack in Appendix~\ref{App:perplexity}. Moreover, in our experimental section there are victim models like Gemma and Jinaai-v3 which already apply preprocess techniques (e.g., truncation) inside, also failing to defend.  In all, to our best knowledge, there is no defense method applicable for our attack so far. As we notice that in other domains there exist approaches utilizing majority voting on subcontent for item-level ranking defense~\cite{liprovably,li2025agnncert}, we will explore adaption of these relevant studies to provide document-level defense in our future work. 
\vspace{-3mm}

%% file: section_camera/7_conclusion.tex
\section{Conclusion}
In this paper we study the vulnerability of LLM-based retrieval by proposing a query-agnostic black-box attack, which requires no knowledge of victim model parameters, access nor the target victim query, and stays effective with zero-shot transferability against different retrievers at the same time. To achieve our attack, we first establish a theoretical framework to describe LLM retrievers' property and verify it empirically. Then based on this framework, we analyze the transferability of the surrogate model attack, and design an adversarial learning mechanism with a word-embedding-surrogate strategy. Our proposed attack is validated to be effective against most popular LLMR models on challenging benchmark datasets. Through this work, we further call for defense methods that not only focus on post-retrieval detections but also build robust retrieval itself.